\documentclass[a4paper]{article}
\bibliographystyle{plainnat}
\usepackage{natbib}
\usepackage[titletoc,toc,title]{appendix}
\usepackage{amsmath}
\usepackage{amssymb}
\usepackage{amsthm}
\usepackage{bbm}
\usepackage{algorithm}
\usepackage[noend]{algpseudocode}
\usepackage{hyperref}
\hypersetup{
    colorlinks=true,
    linkcolor=blue,
    filecolor=blue,      
    urlcolor=blue,
    citecolor= blue
}

\newtheorem{theorem}{Theorem}[section]
\newtheorem{corollary}{Corollary}[theorem]
\newtheorem{lemma}[theorem]{Lemma}

\allowdisplaybreaks

\title{\textbf{Sampled Fictitious Play is Hannan Consistent}}

\author{Zifan Li \\ {\tt zifanli@umich.edu}
\and Ambuj Tewari \\ {\tt tewaria@umich.edu} }

\begin{document}
\maketitle

\begin{abstract}
Fictitious play is a simple and widely studied adaptive heuristic for playing repeated games. It is well known that fictitious play fails to be Hannan consistent.
Several variants of fictitious play including regret matching, generalized regret matching and smooth fictitious play, are known to be Hannan consistent.
In this note, we consider sampled fictitious play: at each round, the player samples past times and plays the best response to previous moves of other players 
at the sampled time points. We show that sampled fictitious play, using Bernoulli sampling, is Hannan consistent. Unlike several existing Hannan consistency proofs that
rely on concentration of measure results, ours instead uses anti-concentration results from Littlewood-Offord theory.
\end{abstract}
Keywords: adaptive heuristics, learning, repeated games, Hannan consistency, fictitious play\\

\section{Introduction} \label{sec:intro}

In the setting of repeated games played in discrete time, the (unconditional) regret of a player, at any time point, is the difference between the payoffs she would have received had she played the best, in hindsight, constant strategy throughout, and
the payoffs she did in fact receive. \citet{Hannan1957} showed the existence of procedures with a ``no-regret'' property: procedures for which the average regret per time goes to zero for a large number of time points.
His procedure was a simple modification of fictitious play: random perturbations are
added to the cumulative payoffs of every strategy so far and the player picks the strategy with the largest perturbed cumulative payoff. No regret procedures are also called ``universally consistent" \cite[Section 4.7]{Fudenberg1998} or ``Hannan consistent" \citep[Section 4.2]{CesaBianchi2006}.

It is well known that smoothing the cumulative payoffs before computing the best response is crucial to achieve Hannan consistency. One way to achieve smoothness is through stochastic smoothing, or adding perturbations. Without perturbations, the procedure becomes identical to fictitious play, which fails to be Hannan consistent  \citep[Exercise 3.8]{CesaBianchi2006}. Besides Hannan's modification, other variants of fictitious play are also known to be Hannan consistent, including (unconditional) regret matching, generalized (unconditional) regret matching and smooth fictitious play (for an overview, see \citet[Section 10.9]{Hart2013}).

In this note, we consider another variant of fictitious play, namely sampled fictitious play. Here, the player samples past time points using some (randomized) sampling scheme and plays the best response to the moves of the other players restricted to the
set of sampled time points. Sampled fictitious play has been considered by other authors in different contexts. \cite{Kani1995} established convergence to Nash equilibrium in $2 \times 2$ games. \cite{Gilliland2006} provided regret bounds for the game of matching pennies. \cite{LambertIII2005} considered games with identical payoffs for all players and use sampled fictitious play to solve large-scale optimization problems. To the best of our knowledge, it is not known whether sampled fictitious play is Hannan consistent without making any assumptions on the form of the game and payoffs. The purpose of this note is to show that it is indeed Hannan consistent when used with a natural sampling scheme, namely Bernoulli sampling.

\section{Preliminaries} \label{sec:prelim}

Consider a game in strategic form where $M$ is the number of players, $S_i$ is the set of strategies for player $i$, and $u_i: \prod_{j=1}^M S_i \to \mathbb{R}$ is the payoff function for player $i$. For simplicity assume that the payoff functions of all players are $[-1,1]$ bounded. We also assume the number of pure strategies
is the same for each player and that $S_i = \{1,\ldots,N\}$. Let $S = \prod_{i=1}^M S_i$ be the set of $M$-tuples of player strategies. For $s = (s_i)_{i=1}^M \in S$, we denote the strategies of players other than $i$ by $s_{-i} = (s_j)_{1 \le j \le M, j\neq i}$.

The game is played repeatedly over (discrete) time $t = 1,2,\ldots$. A learning procedure for player $i$ is a procedure that maps the history $h_{t-1} = (s_\tau)_{\tau=1}^{t-1}$ of plays just prior to time $t$, to a strategy
$s_{t,i} \in S_i$. The learning procedure is allowed to be randomized, i.e., player $i$ has access to a stream of \ random variables $\epsilon_1,\epsilon_2,\ldots$ and she is allowed to use $\epsilon_1,\ldots,\epsilon_{t-1}$, in addition to
$h_{t-1}$, to choose $s_{t,i}$. Player $i$'s regret at time $t$ is defined as
\[
\mathcal{R}_{t,i} = \max_{k \in S_i} \sum_{\tau=1}^t u_i(k, s_{\tau,-i}) - \sum_{\tau=1}^t u_i(s_\tau) .
\]
This compares the player's cumulative payoff with the payoff she could have received had she selected the best constant (over time) strategy $k$ with knowledge of the other players' moves.

A learning procedure for player $i$ is said to be {\em Hannan consistent} if and only if
\[
\limsup_{t \to \infty} \frac{\mathcal{R}_{t,i}}{t} \le 0 \quad\quad \text{almost surely}.
\]
Hannan consistency is also known as the ``no-regret'' property and as ``universal consistency". The term ``universal" refers to the fact that the regret per time goes to zero irrespective of what the other players do.

{\em Fictitious play} is a (deterministic) learning procedure where player $i$ plays the best response to the plays of the other players so far. That is,
\begin{equation}
\label{eq:fp}
s_{t,i} \in \arg\max_{k \in \{1,\dots,N\}} \sum_{\tau=1}^{t-1} u_i(k, s_{\tau,-i}).
\end{equation}
As mentioned earlier, fictitious play is not Hannan consistent. However, consider the following modification of fictitious play, called {\em sampled fictitious play}. At time $t$, player randomly selects a subset $\mathbb{S}_t \subseteq \{1,\ldots,t-1\}$
of previous time points and plays the best response to the other players' moves only over $\mathbb{S}_t$. That is,
{\small
\begin{equation}
\label{eq:sfp}
s_{t,i} \in \arg\max_{k \in \{1,\dots,N\}} \sum_{\tau\in\mathbb{S}_t} u_i(k, s_{\tau,-i}).
\end{equation}
}If multiple strategies achieve the maximum, then the tie is broken uniformly at random, and independently with respect to all previous randomness. Also, if $\mathbb{S}_t$ turns out to be empty (an event that happens with probability exactly $2^{-(t-1)}$ under the Bernoulli sampling described below),  we adopt the convention that the argmax above includes all $N$ strategies.

In this note, we consider {\em Bernoulli sampling}, i.e., any particular round $\tau \in \{1,\ldots,t-1\}$ is included in $\mathbb{S}_t$ independently with probability $1/2$ . More specifically, if $\epsilon^{(t)}_1,\ldots,\epsilon^{(t)}_{t-1}$ are i.i.d.\ symmetric Bernoulli (or Rademacher) random variables taking values in $\{-1,+1\}$, then
\begin{equation}\label{eq:bs}
\mathbb{S}_t = \{ \tau \in \{1,\ldots,t-1\} \::\: \epsilon^{(t)}_\tau = +1 \}
\end{equation}
and therefore,
\[
\sum_{\tau\in\mathbb{S}_t} u_i(k, s_{\tau,-i}) = \sum_{\tau = 1}^{t-1} \frac{(1+\epsilon^{(t)}_\tau)}{2}  u_i(k, s_{\tau,-i}).
\]
Note that the procedure defined by the combination of~\eqref{eq:sfp} and~\eqref{eq:bs} is completely parameter free, i.e., there is no tuning parameter that has to be carefully tuned in order to obtain desired convergence properties.

\section{Result and Discussion}\label{sec:discussion}

Our main result is the following.

\begin{theorem}\label{thm:main}
Sampled fictitious play~\eqref{eq:sfp} with Bernoulli sampling~\eqref{eq:bs} is Hannan consistent.
\end{theorem}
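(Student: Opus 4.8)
The plan is to recognize sampled fictitious play with Bernoulli sampling as an instance of Follow-the-Perturbed-Leader and to run the standard potential/telescoping analysis for FTPL, with the one new feature that the ``stability'' estimates --- obtained from concentration (smoothness of a perturbation density) in the usual smooth-perturbation analyses --- come here from Littlewood--Offord anti-concentration for Rademacher sums. Fix player $i$, write $x_\tau(k)=u_i(k,s_{\tau,-i})$ and $G_t(k)=\sum_{\tau=1}^{t-1}x_\tau(k)$, and let $\mathcal F_{t-1}$ be the $\sigma$-field generated by all realized plays through round $t-1$ together with the other players' randomness through round $t$, so that each $x_\tau$ is $\mathcal F_{\tau-1}$-measurable while $\epsilon^{(t)}$ is independent of $\mathcal F_{t-1}$. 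With $q_t(k)=\Pr(s_{t,i}=k\mid\mathcal F_{t-1})$ and using $u_i(s_\tau)=x_\tau(s_{\tau,i})$, I split $\mathcal R_{t,i}=\overline{\mathcal R}_t+M_t$, where $\overline{\mathcal R}_t=\max_k G_{t+1}(k)-\sum_{\tau=1}^t\langle q_\tau,x_\tau\rangle$ is the pseudo-regret and $M_t=\sum_{\tau=1}^t\big(\langle q_\tau,x_\tau\rangle-x_\tau(s_{\tau,i})\big)$ is a martingale with increments in $[-2,2]$. By Azuma--Hoeffding and Borel--Cantelli, $M_t/t\to0$ almost surely, so it suffices to bound $\overline{\mathcal R}_t$ and show that along every realization $\overline{\mathcal R}_t=o(t)$.

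For the pseudo-regret, note that, up to the irrelevant factor $\tfrac12$, the sampled best response is $\arg\max_k\big(G_t(k)+P_t(k)\big)$ with $P_t(k)=\sum_{\tau=1}^{t-1}\epsilon^{(t)}_\tau x_\tau(k)$, so $q_t$ is a subgradient at $G_t$ of the convex potential $\phi_t(\theta)=\mathbb E_\varepsilon\!\big[\max_k\big(\theta_k+\sum_{\tau=1}^{t-1}\varepsilon_\tau x_\tau(k)\big)\big]$, which satisfies $\phi_t(\theta)\ge\max_k\theta_k$ by Jensen and $\phi_1(\theta)=\max_k\theta_k$. Rewriting $\sum_{t\le T}[\phi_t(G_{t+1})-\phi_t(G_t)]$ as $\phi_T(G_{T+1})-\phi_1(G_1)-\sum_{t=2}^{T}[\phi_t(G_t)-\phi_{t-1}(G_t)]$ and combining a subgradient inequality with the elementary fact $x_t(\widetilde w_t)\ge x_t(w_t)$, with $\phi_T(G_{T+1})\ge\max_k G_{T+1}(k)$ and $\phi_1(G_1)=0$, yields
\[
\overline{\mathcal R}_T\;\le\;O(1)\;+\;\sum_{t=1}^{T}\mathbb E_{\epsilon^{(t)}}\!\big[x_t(\widetilde w_t)-x_t(w_t)\big]\;+\;\sum_{t=2}^{T}\big[\phi_t(G_t)-\phi_{t-1}(G_t)\big],
\]
where $w_t,\widetilde w_t$ are the tie-broken leaders of $G_t+P_t$ and $G_{t+1}+P_t$. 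The crucial point --- and the reason a bare ``probability of a near-tie'' estimate is not enough --- is that each of the two sums must be controlled by a \emph{payoff-weighted pairwise} bound: a short computation with the second difference of $\max$ (the relevant coordinates in each summand all lie within $2$ of the maximum of the perturbed cumulative vector at hand, namely $G_s$ plus a Rademacher-weighted partial sum of the $x_\tau$) shows that, after an index shift, both sums are at most $C\sum_{j\ne k}\sum_{s\le T}|x_s(j)-x_s(k)|\cdot\Pr_\varepsilon\!\big(\big|\sum_{\tau<s}\varepsilon_\tau(x_\tau(j)-x_\tau(k))-c_{s,j,k}\big|\le2\big)$ for suitable constants $c_{s,j,k}$; a pair of strategies whose payoffs seldom differ then contributes little through its weight $|x_s(j)-x_s(k)|$ even if its near-tie probability is of constant order.

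Now Littlewood--Offord supplies the last ingredient: by the Erd\H{o}s--Littlewood--Offord theorem (equivalently, Kolmogorov--Rogozin), $\Pr_\varepsilon\!\big(\big|\sum_{\tau<s}\varepsilon_\tau a_\tau-c\big|\le2\big)\le C\big/\sqrt{\sum_{\tau<s}a_\tau^2}$ uniformly in $c$, so both sums above are at most $C'\sum_{j\ne k}\sum_{s\le T}|x_s(j)-x_s(k)|\big/\sqrt{V^{(jk)}_{s-1}}$ with $V^{(jk)}_s=\sum_{\tau\le s}(x_\tau(j)-x_\tau(k))^2$. It then remains to prove an elementary real-analysis lemma: for reals $a_\tau$ with $|a_\tau|\le2$ and $V_t=\sum_{\tau\le t}a_\tau^2$, one has $\sum_{t\le T}|a_t|/\sqrt{V_{t-1}}=o(T)$; I would split rounds according to whether $a_t^2\ge\varepsilon V_{t-1}$ (the former can occur at most $O(\varepsilon^{-1}\log T)$ times since $V$ then grows geometrically while staying $\le4T$, and each round of the latter type contributes at most $\sqrt\varepsilon$) and optimize $\varepsilon=\varepsilon_T\to0$, say $\varepsilon_T=1/\log T$, to get $O(\log^2T)+T/\sqrt{\log T}$. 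Applying this with $a_\tau=x_\tau(j)-x_\tau(k)$ and summing over the $\binom N2$ pairs gives $\overline{\mathcal R}_T=o(T)$ along every realization, which together with $M_t/t\to0$ yields $\limsup_t\mathcal R_{t,i}/t\le0$ almost surely.

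I expect the main obstacle to be exactly the step emphasized above: obtaining the payoff-weighted pairwise control of the stability and perturbation-scale-change terms instead of a plain near-tie probability, since it is this refinement --- not any single clean anti-concentration inequality --- that makes degenerate configurations go through, in particular pairs of strategies whose payoffs almost always coincide, which are unavoidable against adversarial opponents and are precisely what defeats ordinary fictitious play. A secondary nuisance is threading the uniform tie-breaking through the whole argument; the one inequality this might seem to threaten, $x_t(\widetilde w_t)\ge x_t(w_t)$, survives because every coordinate in $\arg\max_k(G_{t+1}(k)+P_t(k))$ has round-$t$ payoff at least that of every coordinate in $\arg\max_k(G_t(k)+P_t(k))$.
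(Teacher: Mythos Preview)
Your route is genuinely different from the paper's and, modulo two fixable issues noted below, sound. The paper first reduces to oblivious opponents via a single-stream coupling together with Lemma~4.1 of Cesa-Bianchi--Lugosi, then uses the be-the-leader lemma to reduce expected regret to payoff-weighted switching probabilities $\sum_t |g_{t,i\ominus j}|\,P(\tilde G_{t-1,i\ominus j}\ge 0,\ \tilde G_{t,i\ominus j}\le 0)$, and controls these by a \emph{multi-scale} decomposition of the step sizes $|g_{t,i\ominus j}|$ combined with the classical Erd\H{o}s--Littlewood--Offord theorem (which needs a uniform lower bound on the steps); this yields the explicit rate $O(N^2\sqrt{T\log\log T})$. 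You instead handle adaptive opponents directly through a pseudo-regret/martingale split, run a potential-function FTPL telescoping to reduce to stability and perturbation-scale-change terms, bound both by payoff-weighted pairwise near-tie probabilities (your verification that $x_t(\widetilde w_t)\ge x_t(w_t)$ survives tie-breaking is correct), and then appeal to a \emph{variance-based} small-ball inequality, which sidesteps the multi-scale machinery entirely. The price is quantitative: your dichotomy lemma delivers only $O(T/\sqrt{\log T})$, far from the paper's near-$\sqrt{T}$ rate, though still $o(T)$.

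Two technical points need repair. First, the bound $\Pr\big(|\sum_{\tau<s}\varepsilon_\tau a_\tau-c|\le 2\big)\le C/\sqrt{\sum_{\tau<s}a_\tau^2}$ is \emph{true} when $|a_\tau|\le 2$, but it is not the Erd\H{o}s--Littlewood--Offord theorem (which requires $|a_\tau|\ge 1$), nor does the basic Kolmogorov--Rogozin inequality give it, since for Rademacher steps $1-Q(\varepsilon_\tau a_\tau;\lambda)$ vanishes whenever $2|a_\tau|\le\lambda$. You need Berry--Esseen (here the third-moment correction is $O(1/\sigma)$) or Esseen's concentration-function inequality to place the variance in the denominator; the paper's whole multi-scale apparatus exists precisely because it works only with the classical Erd\H{o}s bound. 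Second, your count of ``type~A'' rounds ($a_t^2\ge\varepsilon V_{t-1}$) is not $O(\varepsilon^{-1}\log T)$ in general: if the first nonzero $a_t$ is extremely small, $V$ can double many more than $\log T$ times before reaching~$1$ while staying below $4T$. The fix is to handle rounds with $V_{t-1}\le 1$ separately --- their total contribution is $\sum_{t:\,V_{t-1}\le 1}|a_t|\le \sqrt{T}\cdot\sqrt{\sum_{t:\,V_{t-1}\le 1}a_t^2}=O(\sqrt{T})$ by Cauchy--Schwarz --- after which your geometric-growth argument for the remaining type~A rounds is valid. Both issues are local and do not threaten the overall architecture.
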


Before we move on to the proof, a few remarks are in order.

\paragraph{Computational tractability} It is a simple but important observation that the \emph{form} of the optimization problem solved by fictitious play~\eqref{eq:fp} is \emph{exactly} the same as the optimization problem solved
by sampled fictitious play~\eqref{eq:sfp}. This can be very useful when the player has a large strategy set and does not want to enumerate all strategies to solve the maximization involved in both fictitious play and its sampled version.
For example, \cite{LambertIII2005} describe their computational experience with sampled fictitious play in the context of a dynamic traffic assignment problem.

\paragraph{Rate of convergence}
Our proof gives the rate of convergence of (expected) average regret as $O(N^2 \sqrt{ \log \log t / t})$ where the constant hidden in $O(\cdot)$ notation is small and explicit. It is known that the optimal rate is $O(\sqrt{\log N/t})$ \cite[Section 2.10]{CesaBianchi2006}. Therefore, our rate of convergence is almost optimal in $t$ but severely suboptimal in $N$.  This raises several interesting questions. What is the best bound possible for Sampled Fictitious Play with Bernoulli sampling? Is there a sampling scheme for which Sampled Fictitious Play procedure achieves the optimal rate of convergence? The first question is partially answered by Theorem $\ref{thm:counterexample}$ in Appendix \ref{sec:counterexample} which states that the dependency on $N$ is likely to be polynomial instead of logarithmical, but there is still some gap between the lower bound and the upper bound we provide.

\paragraph{Asymmetric probabilities}
Instead of using symmetric Bernoulli probabilities, we can choose $\epsilon^{(t)}_\tau$ such that $P(\epsilon^{(t)}_\tau = +1) = \alpha$. As $\alpha \to 1$, the learning procedure becomes fictitious play and as $\alpha \to 0$, it selects strategies uniformly at random. Therefore, it is natural to expect that the regret bound will blow up near the two extremes of $\alpha = 1$ and $\alpha = 0$. We can make this intuition precise, but only for $\{-1,0,1\}$-valued payoffs (instead of $[-1,1]$-valued). For details, see Appendix~\ref{sec:AB} in the supplementary material.

\paragraph{Follow the perturbed leader} Note that
\[
\arg\max_{k \in \{1,\dots,N\}} \sum_{\tau = 1}^{t-1} \frac{(1+\epsilon^{(t)}_\tau)}{2}  u_i(k, s_{\tau,-i})
= \arg\max_{k \in \{1,\dots,N\}}  \Big( \sum_{\tau = 1}^{t-1} u_i(k, s_{\tau,-i}) +  \sum_{\tau = 1}^{t-1} \epsilon^{(t)}_\tau u_i(k, s_{\tau,-i}) \Big) .
\] Therefore, we can think of sampled fictitious play as adding a random perturbation to the expression that fictitious play optimizes. Such algorithms are referred to as ``follow the perturbed leader'' (FPL) in the computer science literature (``fictitious play'' is known as ``follow the leader''). This family was originally proposed by \cite{Hannan1957} and popularized by \citet{Kalai2005}. Closer to this paper are the FPL algorithms of \citet{Devroye2013} and \citet{Erven2014}. However, none of these papers considered sampled fictitious play.

\paragraph{Extension to conditional (or internal) regret}
In this paper we focus on unconditional (or external) regret. Other notions of regret, especially conditional (or internal) regret can also be considered. Internal regret measures
the worst regret, over $N(N-1)$ choices of $k \neq k'$, of the form ``every time strategy $k$ was picked, strategy $k'$ should have been picked instead''. 
There are generic conversions \citep{Stoltz2005,Blum2007} that will convert any learning procedure with small external regret to one with small internal regret. These conversion, however, require access to the probability distribution over strategies at each time point. This probability distribution can be approximated, to arbitrary accuracy, by making the choice of the strategy in~\eqref{eq:sfp} multiple times each time selecting the random subset $\mathbb{S}_t$ independently. However, doing so and using a generic conversion from external to internal regret will lead to a cumbersome overall algorithm. It will be nicer to design a simpler sampling based learning procedure with small internal regret.

\section{Proof of the Main Result}\label{sec:proof}

We break the proof of our main result into several steps. The first and third steps involve fairly standard arguments in this area. Our main innovations are in step two.

\subsection{Step 1: From Regret to Switching Probabilities}\label{subsec:1}

In this step, we assume that players other than player $i$ (the ``opponents") are {\em oblivious}, i.e., they do not adapt to what player $i$ does. Mathematically, this means that the sequence $s_{t,-i}$ does not depend on the moves $s_{t,i}$ of player $i$. We will prove a uniform regret bound that holds for all deterministic payoff sequences $\{s_{t,-i} \}^T_{t=1}$, by which we can conclude that the same bound holds for oblivious but random payoff sequences as well. Since player $i$ is fixed for the rest of the proof, we will not carry the index $i$ in our notation further. 
Let the vector $g_t \in [-1,1]^{N}$ be defined as $g_{t,k} = u_i(k,s_{t,-i})$ for $k \in \{1,\ldots,N\}$. Moreover, we denote player $i$'s move $s_{t,i}$ as $k_t$. With this notation, regret at time $T$ equals
\[
\mathcal{R}_T = \max_{k \in \{1,\dots,N\}} \sum_{t=1}^T g_{t,k} - \sum_{t=1}^T g_{t,k_t} .
\]
In this step, we will look at the expected regret. Because the opponents are oblivious, this equals
\[
\mathbb{E}\left[ \mathcal{R}_T \right] = \max_{k \in \{1,\dots,N\}} \sum_{t=1}^T g_{t,k} - \mathbb{E}\left[ \sum_{t=1}^T g_{t,k_t} \right]
= \max_{k \in \{1,\dots,N\}} \sum_{t=1}^T g_{t,k} -  \sum_{t=1}^T \mathbb{E}\left[ g_{t,k_t} \right] .
\]
Recall that 
\[
k_t \in \arg\max_{k \in \{1,\dots,N\}}  \sum_{\tau = 1}^{t-1} \frac{(1+\epsilon^{(t)}_\tau)}{2} g_{\tau,k}.
\]
Since $g_t$'s are fixed vectors, by independence we see that the distribution of $k_t$ is exactly the same whether or not we share the Rademacher random variables across time points. Therefore, we do not have to draw a fresh sample $\epsilon^{(t)}_1,\ldots,\epsilon^{(t)}_{t-1}$ at time $t$. Instead, we fix a single stream $\epsilon_1,\epsilon_2,\ldots$ of i.i.d.\ Rademacher random variables and set
$(\epsilon^{(t)}_1,\ldots,\epsilon^{(t)}_{t-1}) = (\epsilon_1,\ldots,\epsilon_{t-1})$ for all $t$.  With this reduction in number of random variables used, we now have
{\small
\begin{equation}\label{eq:sfp_single}
k_t \in \arg\max_{k \in \{1,\dots,N\}} \sum_{\tau=1}^{t-1} (1+\epsilon_\tau) g_{\tau,k}.
\end{equation}
}

We define $G_t = \sum_{\tau=1}^{t}g_\tau$, the cumulative payoff vector at time $t$. Define $\tilde{g}_t = (1+\epsilon_t) g_t$ and $\tilde{G}_t = \sum_{\tau=1}^{t}\tilde{g}_\tau$. We also define 
$$ 
g_{t,i \ominus j} = g_{t,i} - g_{t,j}\ ,\ \tilde{g}_{t,i \ominus j} = \tilde{g}_{t,i} - \tilde{g}_{t,j} .
$$
With these definitions, we have
\begin{multline*}
\tilde{G}_{t,i \ominus j} = \tilde{G}_{t,i} - \tilde{G}_{t,j} 
= \sum_{\tau=1}^{t} \tilde{g}_{\tau,i} - \sum_{\tau=1}^{t} \tilde{g}_{\tau,j} 
\\
= \sum_{\tau=1}^{t} (1+\epsilon_\tau)(g_{\tau,i} - g_{\tau,j}) 
= \sum_{\tau=1}^{t} (1+\epsilon_\tau)g_{\tau,i \ominus j} .
\end{multline*}
The following result upper bounds the regret in terms of downward zero-crossings of the process $\tilde{G}_{t,i \ominus j}$, i.e., the times $t$ when it switches from
being non-negative at time $t-1$ to non-positive at time $t$.

\begin{theorem}
\label{thm:regret2switches}
We have the following upper bound on the expected regret:
$$
\mathbb{E}\left[ \mathcal{R}_T \right] \leq2 N^2 \max\limits_{1 \leq i,j \leq N} \sum_{t=1}^{T}  |g_{t,i \ominus j}|P\left(  \tilde{G}_{t-1,i \ominus j} \geq 0, \tilde{G}_{t,i \ominus j} \le 0 \right) .
$$
\end{theorem}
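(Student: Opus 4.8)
The plan is to bound the expected regret by a telescoping/summation-by-parts argument that tracks, for each pair of strategies $(i,j)$, how the "leader" (the argmax in~\eqref{eq:sfp_single}) changes over time. First I would reduce the regret to a comparison against the single best fixed strategy $k^\star = \arg\max_k \sum_{t=1}^T g_{t,k}$, so that $\mathbb{E}[\mathcal{R}_T] = \sum_{t=1}^T \mathbb{E}[g_{t,k^\star} - g_{t,k_t}]$. The natural idea is to write $g_{t,k^\star} - g_{t,k_t}$ as a sum over a chain of intermediate strategies and use the defining property of $k_t$: since $k_t$ maximizes $\tilde{G}_{t-1,\cdot}$, we have $\tilde{G}_{t-1,k_t} \ge \tilde{G}_{t-1,k}$ for every $k$, and in particular $\tilde{G}_{t-1,k_t \ominus k^\star} \ge 0$. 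This suggests that the event "$k_t \ne k^\star$" forces the process $\tilde{G}_{\cdot, k_t \ominus k^\star}$ to have crossed or be at zero, and that is exactly the downward-crossing event the theorem is about (up to swapping the roles of $i$ and $j$). The factor $N^2$ should come from union-bounding over the $\le N^2$ ordered pairs $(i,j)$, and the factor $2$ together with $|g_{t,i\ominus j}|$ should come from bounding the per-step payoff gap $g_{t,k^\star} - g_{t,k_t}$ by a difference term that is nonzero only when a relevant crossing occurs.

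Concretely, I would try the following. Define for each pair $(i,j)$ the "lead" process $L_t^{(i,j)} = \tilde{G}_{t,i\ominus j}$, and note $k_t$ is a strategy $i$ with $L_{t-1}^{(i,j)} \ge 0$ for all $j$. The key accounting step: bound
$$
\sum_{t=1}^T (g_{t,k^\star} - g_{t,k_t}) \le \sum_{t=1}^T \sum_{(i,j)} |g_{t,i\ominus j}| \cdot \mathbbm{1}[\text{a downward crossing of } L^{(i,j)} \text{ at time } t],
$$
and then take expectations, exchanging sum and expectation, to get $\sum_{(i,j)} \sum_t |g_{t,i\ominus j}| P(L_{t-1}^{(i,j)} \ge 0, L_t^{(i,j)} \le 0)$. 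To establish the pointwise inequality I expect to use a potential/telescoping argument on a quantity like $\max_k \tilde{G}_{t,k} - \tilde{G}_{t,k^\star}$ or $\tilde{G}_{t,k_{t+1}} - \tilde{G}_{t,k^\star}$: the running difference between what the leader has accumulated and what the fixed comparator has accumulated can only shrink when the leader is "overtaken," and each overtaking is a crossing event contributing at most $|g_{t,i\ominus j}|$ (or twice that, since $\tilde g = (1+\epsilon)g$ has magnitude at most $2|g|$). The $\le N^2$ pairs and the worst-case $\max_{i,j}$ replacing the sum over pairs give the stated bound.

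The main obstacle, and the place I would spend the most care, is making the "overtaking $\Rightarrow$ crossing" implication exact and getting the combinatorics of the chain-of-strategies decomposition right — in particular, ensuring that every unit of per-step regret $g_{t,k^\star}-g_{t,k_t}$ is charged to a genuine downward zero-crossing of some $L^{(i,j)}$ between times $t-1$ and $t$, including correctly handling ties in the argmax (which are broken uniformly at random) and the boundary/degenerate cases (empty $\mathbb{S}_t$, or the process touching zero exactly). The inequality in the theorem is stated with $\tilde{G}_{t-1,i\ominus j}\ge 0$ and $\tilde{G}_{t,i\ominus j}\le 0$ (non-strict on both ends), which is forgiving and suggests the intended argument absorbs ties into the crossing event; I would lean on that. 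A secondary, more routine obstacle is verifying the constant: tracking whether the right leading constant is $2N^2$ versus something smaller, which comes down to whether both the $(1+\epsilon)$ factor and a symmetrization over the pair ordering each contribute a factor, and I would reconcile this against the claimed $O(N^2\sqrt{\log\log t / t})$ rate quoted in the discussion.
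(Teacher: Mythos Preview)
Your proposal has the right target (sum over pairs, downward crossings, $N^2$ from the pair count) but is missing the step that actually makes the localization work. Comparing $k_t$ directly to the fixed comparator $k^\star$ does \emph{not} produce a crossing at time $t$: the event $\{k_t\neq k^\star\}$ only tells you $\tilde G_{t-1,k_t\ominus k^\star}\ge 0$, which says nothing about time $t$. Indeed your candidate pointwise inequality
\[
\sum_{t=1}^T (g_{t,k^\star}-g_{t,k_t}) \;\le\; \sum_{t=1}^T\sum_{(i,j)} |g_{t,i\ominus j}|\,\mathbbm{1}\bigl[L^{(i,j)}_{t-1}\ge 0,\ L^{(i,j)}_t\le 0\bigr]
\]
can fail: on a realization where the leader never changes ($k_t\equiv 1$ for all $t$) but $k^\star=2$, the left side is positive and the right side is zero. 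So the ``overtaking $\Rightarrow$ crossing'' implication you flagged as the main obstacle is not just a detail to be filled in; it is false as stated.

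The paper closes this gap with the classical \emph{be-the-leader} lemma: $\sum_{t=1}^T \tilde g_{t,k_{t+1}} \ge \max_k \tilde G_{T,k}$. After passing from $g$ to $\tilde g$ in expectation (using $\mathbb{E}[(1+\epsilon_t)]=1$ and independence of $\epsilon_t$ from $k_t$), this lemma replaces the comparison against $k^\star$ by a comparison of \emph{consecutive} leaders,
\[
\mathbb{E}[\mathcal{R}_T] \;\le\; \mathbb{E}\!\left[\sum_{t=1}^T \bigl(\tilde g_{t,k_{t+1}}-\tilde g_{t,k_t}\bigr)\right]
= \mathbb{E}\!\left[\sum_{t=1}^T (1+\epsilon_t)\bigl(g_{t,k_{t+1}}-g_{t,k_t}\bigr)\right].
\]
Now the event $\{k_t=i,\ k_{t+1}=j\}$ \emph{does} force a crossing at time $t$: $i$ leads at $t-1$ so $\tilde G_{t-1,i\ominus j}\ge 0$, and $j$ leads at $t$ so $\tilde G_{t,i\ominus j}\le 0$. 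Bounding $(1+\epsilon_t)\le 2$, decomposing over the $N^2$ ordered pairs, and replacing the sum by $N^2\max_{i,j}$ gives exactly the stated bound with constant $2N^2$. You were circling this when you mentioned a potential involving $\tilde G_{t,k_{t+1}}$, but the concrete mechanism is the be-the-leader inequality, not a direct charge against $k^\star$.
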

The proof of this theorem can be found in Appendix \ref{sec:proofs}. We now focus on bounding the switching probabilities for a fixed pair $i,j$.

\subsection{Step 2: Bounding Switching Probabilities Using Littlewood-Offord Theory}

Our strategy is to do a ``multi-scale'' analysis and, within each scale, apply Littlewood-Offord theory to bound the switching probabilities. The need for a multi-scale argument arises from the requirement in Littlewood-Offord theorem
(see Theorem~\ref{thm:LO} below) for a lower bound on the step sizes of random walks. We partition the set of $T$ time points $[T] := \{1,\ldots,T\}$ into $K+1$ disjoint sets at different scales, denoted as $\{A_k\}_{k=0}^{K}$ where 
$$
A_k = \begin{cases}
\{t \in [T] : |g_{t,i \ominus j}| \leq \frac{1}{\sqrt{T}}\} \quad &k = 0\\
\{t \in [T] : T^{-\frac{1}{2^k}} < |g_{t,i \ominus j}| \leq T^{-\frac{1}{2^{k+1}}}\} \quad &k = 1, \dots, K-1\\
\{t \in [T] : T^{-\frac{1}{2^K}} <|g_{t,i \ominus j}| \leq 2\} \quad &k = K
\end{cases}.
$$
Note that actually $A_k$ depends on $i,j$ as well but for the sake of clarity we drop this dependence in the notation. The cardinality of a finite set $A$ will be denoted by $|A|$. The number $K+1$ of different scales is determined by
$$
K = \arg \min \{k \in \mathbb{N} \::\: T^{-\frac{1}{2^k}} \geq 1/2 \} .
$$ 
$\forall t,i, \ g_{t,i} \in [-1,1]$ so $|g_{t,i \ominus j}| \in [0,2]$. The scales here are chosen such that $K$ is not very large (of order $O\Big(\log \log(T)\Big)$) and still covers the entire range of the payoffs.
It easily follows that,
\begin{align*}
&\sum_{t=1}^{T}  |g_{t,i \ominus j}|P\left(  \tilde{G}_{t-1,i \ominus j} \geq 0, \tilde{G}_{t,i \ominus j} \le 0 \right) \\
= & \sum_{t=1}^T |g_{t,i \ominus j}|P\left(\sum_{\tau=1}^{t-1}\epsilon_\tau g_{\tau,i \ominus j} \geq -\sum_{\tau=1}^{t-1} g_{\tau,i \ominus j} , \sum_{\tau=1}^{t}\epsilon_\tau g_{\tau,i \ominus j} \le -\sum_{\tau=1}^{t} g_{\tau,i \ominus j}\right) \\
= & \sum_{k=0}^{K} \sum_{t \in A_k} |g_{t,i \ominus j}|P\left(\sum_{\tau=1}^{t-1}\epsilon_\tau g_{\tau,i \ominus j} \geq -\sum_{\tau=1}^{t-1} g_{\tau,i \ominus j} , \sum_{\tau=1}^{t}\epsilon_\tau g_{\tau,i \ominus j} \le -\sum_{\tau=1}^{t} g_{\tau,i \ominus j}\right) .
\end{align*}
We now want to argue that the probabilities involved above are small. The crucial observation is that, if a switch occurs, then the random sum $\sum_{\tau=1}^{t}\epsilon_\tau g_{\tau,i \ominus j}$
has to lie in a sufficiently small interval. Such ``small ball'' probabilities are exactly what the classic Littlewood-Offord theorem controls.

\begin{theorem}[Littlewood-Offord Theorem of Erd\"os, Theorem 3 of \cite{Erdos1945}]
\label{thm:LO}
Let $x_1, \dots, x_n$ be n real numbers such that $|x_i| \geq 1$ for all $i$. For any given radius $\Delta > 0$, the small ball probability satisfies
$$
\sup\limits_{B}P(\epsilon_1x_1+\dots+\epsilon_nx_n \in B) \leq \frac{S(n)}{2^n}(\lfloor \Delta \rfloor + 1) 
$$ 
where $\epsilon_1, \dots, \epsilon_n$ are i.i.d.\ Rademacher random variables, $B$ ranges over all closed balls (intervals) of radius $\Delta$, and $\lfloor x \rfloor$ refers to the integral part of $x$, $S(n)$ is the largest binomial coefficient belonging to n.
\end{theorem}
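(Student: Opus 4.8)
The plan is to convert the small-ball probability into a purely combinatorial counting problem over the Boolean lattice and then apply extremal set theory. Since
$P(\epsilon_1 x_1 + \cdots + \epsilon_n x_n \in B) = 2^{-n}\, \#\{\sigma \in \{-1,+1\}^n : \sum_{i} \sigma_i x_i \in B\}$,
the theorem is equivalent to the statement that at most $S(n)(\lfloor \Delta \rfloor + 1)$ of the $2^n$ signed sums can land in any fixed interval $B$ of length $2\Delta$. First I would reduce to the case $x_i \geq 1$ for every $i$: replacing a single $x_i$ by $-x_i$ is absorbed by the relabelling $\epsilon_i \mapsto -\epsilon_i$, which is a bijection of $\{-1,+1\}^n$ leaving the multiset of achievable sums unchanged and preserving $|x_i| \ge 1$. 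Hence we lose nothing by assuming $x_i \geq 1$ throughout.

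Next I would encode each sign vector by the set $A = \{i : \sigma_i = +1\} \subseteq \{1,\dots,n\}$, under which $\sum_i \sigma_i x_i = 2\sum_{i \in A} x_i - \sum_{i=1}^{n} x_i$. The crucial monotonicity is that whenever $A \subsetneq A'$, the two corresponding sums differ by $2\sum_{i \in A'\setminus A} x_i \geq 2|A'\setminus A| \geq 2$, using $x_i \geq 1$. Consequently, along any chain $A_0 \subsetneq A_1 \subsetneq \cdots \subsetneq A_m$ in the subset lattice the associated sums are strictly increasing with consecutive gaps at least $2$; so at most $\lfloor \Delta \rfloor + 1$ of them can lie in an interval of length $2\Delta$ (if $k$ of them do, the largest and smallest differ by at least $2(k-1)$, which must be $\le 2\Delta$, forcing $k \le \Delta + 1$).

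The final ingredient is a decomposition of the full Boolean lattice $2^{\{1,\dots,n\}}$ into few chains. I would invoke Dilworth's theorem together with Sperner's theorem — the largest antichain in $2^{\{1,\dots,n\}}$ has size $\binom{n}{\lfloor n/2 \rfloor} = S(n)$ — to partition the $2^n$ subsets into exactly $S(n)$ chains (equivalently, cite the explicit symmetric chain decomposition of de Bruijn--Tengbergen--Kruyswijk). Applying the per-chain bound from the previous step to each of these chains and summing yields at most $S(n)(\lfloor \Delta \rfloor + 1)$ subsets whose sum lies in $B$; dividing by $2^n$ gives the stated inequality, uniformly over all $B$.

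The main obstacle is securing the chain decomposition with the correct number of chains: the bound is sharp only because $2^{\{1,\dots,n\}}$ splits into \emph{precisely} $S(n)$ chains, which is exactly the content of Sperner/Dilworth. The one place to be careful in the otherwise routine part is the factor of $2$ in the consecutive-gap estimate — it arises because $\sigma_i x_i$ changes by $2x_i$, not $x_i$, when a single sign flips, and it is this factor that produces $\lfloor \Delta \rfloor + 1$ rather than $\lfloor 2\Delta \rfloor + 1$, making the stated constant correct.
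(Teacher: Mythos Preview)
Your proposal is correct and is essentially Erd\H{o}s's original argument; note, however, that the paper does not prove Theorem~\ref{thm:LO} at all---it simply quotes it from \cite{Erdos1945} as a black box. So there is no ``paper's proof'' to compare against: you have supplied a self-contained proof where the paper relies on a citation.
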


Using elementary calculations to upper bound $\frac{S(n)}{2^n}$ gives us the following corollary.

\begin{corollary}
\label{cor:LO}
Under the same notation and conditions as Theorem~\ref{thm:LO}, we have 
$$
\sup\limits_{B}P(\epsilon_1x_1+\dots+\epsilon_nx_n \in B) \leq C_{LO}(\lfloor \Delta \rfloor + 1)\frac{1}{\sqrt{n}}
$$ 
where $C_{LO} = \frac{2\sqrt{2}e}{\pi} < 3$.
\end{corollary}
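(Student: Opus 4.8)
The plan is to note that, by Theorem~\ref{thm:LO}, the corollary reduces to the purely arithmetic claim
\[
\frac{S(n)}{2^{n}} \;\le\; \frac{C_{LO}}{\sqrt{n}} \qquad \text{for all } n \ge 1, \quad C_{LO} = \frac{2\sqrt{2}\,e}{\pi},
\]
where $S(n) = \binom{n}{\lfloor n/2 \rfloor}$ is the largest binomial coefficient belonging to $n$: multiplying through by $(\lfloor \Delta \rfloor + 1)$ and inserting this into the bound of Theorem~\ref{thm:LO} reproduces exactly the asserted estimate, and no probabilistic content is left. So the whole task is to estimate the central binomial coefficient.

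First I would handle odd indices by reducing them to even ones. For $n = 2m+1$, Pascal's rule together with $\binom{2m+1}{m} = \binom{2m+1}{m+1}$ gives $\binom{2m+2}{m+1} = 2\binom{2m+1}{m}$, hence $S(2m+1)/2^{2m+1} = S(2m+2)/2^{2m+2}$, and since $2m+2 > 2m+1$ it suffices to prove the inequality at even indices. For $n = 2m$ I would apply Stirling's formula with explicit error control, $\ell! = \sqrt{2\pi \ell}\,(\ell/e)^{\ell} e^{\lambda_\ell}$ with $\tfrac{1}{12\ell+1} < \lambda_\ell < \tfrac{1}{12\ell}$, which gives
\[
\frac{1}{4^{m}}\binom{2m}{m} \;=\; \frac{e^{\,\lambda_{2m} - 2\lambda_m}}{\sqrt{\pi m}} \;\le\; \frac{e^{1/(24m)}}{\sqrt{\pi m}} \;\le\; \frac{e^{1/24}}{\sqrt{\pi m}} \;=\; \frac{e^{1/24}\sqrt{2/\pi}}{\sqrt{n}}.
\]
Since $e^{1/24}\sqrt{2/\pi} \approx 0.83$ is comfortably below $C_{LO} < 3$, this finishes the argument, and the reduction above then also covers every odd $n$, including $n = 1$.

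There is really no obstacle here: the statement is elementary, and the paper itself describes it as an ``elementary calculation''. The only points needing a little care are keeping the Stirling error term explicit so that the final constant is provably smaller than $C_{LO}$, and the minor bookkeeping for odd $n$. If one prefers to avoid Stirling altogether, the bound $\binom{2m}{m} \le 4^{m}/\sqrt{3m+1}$ can be proved by a one-line induction using $\binom{2m+2}{m+1} = \frac{2(2m+1)}{m+1}\binom{2m}{m}$, which already yields $S(n)/2^{n} \le \sqrt{2/3}/\sqrt{n}$ and hence the corollary with room to spare.
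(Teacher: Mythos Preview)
Your argument is correct: reducing to the estimate $S(n)/2^{n}\le C_{LO}/\sqrt{n}$, handling odd $n$ via $S(2m+1)/2^{2m+1}=S(2m+2)/2^{2m+2}$, and then applying Robbins' explicit Stirling bounds at even indices gives a constant well below $C_{LO}$ for every $n\ge 1$. The alternative induction bound $\binom{2m}{m}\le 4^{m}/\sqrt{3m+1}$ also works.

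The paper's proof follows the same underlying idea (Stirling on the central binomial coefficient) but packages it differently. It first proves a general lemma bounding $\max_{k}P\bigl(\sum_{i=1}^{t}X_i=k\bigr)$ for i.i.d.\ Bernoulli$(\alpha)$ variables by $\tfrac{e}{2\pi}\sqrt{\tfrac{2}{\alpha(1-\alpha)}}\,t^{-1/2}$ whenever $t>\max(2/\alpha,2/(1-\alpha))$, and then specializes to $\alpha=1/2$ to obtain $S(n)/2^{n}\le \tfrac{\sqrt{2}e}{\pi}\,n^{-1/2}$ for $n>4$; the cases $n\le 4$ are absorbed by doubling the constant to $C_{LO}=\tfrac{2\sqrt{2}e}{\pi}$, since then $C_{LO}/\sqrt{n}>1$. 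The advantage of the paper's route is that the general-$\alpha$ lemma is reused in the asymmetric-sampling analysis in the appendix; your direct argument is more self-contained and avoids the separate small-$n$ case, at the cost of not providing that reusable lemma.
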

The proof of this corollary can be found in Appendix \ref{sec:proofs}.

The scale of payoffs for time periods in $A_0$ is so small that we do not need any Littlewood-Offord theory to control their contribution to the regret. Simply bounding the probabilities by $1$ gives us the following.

\begin{theorem}\label{thm:A_0}
The following upper bound holds for switching probabilities for time periods within $A_0$:
\begin{multline*}
\sum_{t \in A_0} |g_{t,i \ominus j}|P\left(\sum_{\tau=1}^{t-1}\epsilon_\tau g_{\tau,i \ominus j} \geq -\sum_{\tau=1}^{t-1} g_{\tau,i \ominus j} , \sum_{\tau=1}^{t}\epsilon_\tau g_{\tau,i \ominus j} \le -\sum_{\tau=1}^{t} g_{\tau,i \ominus j}\right) \\
\leq {\sqrt{|A_0|}} \leq 20C_{LO} {\sqrt{|A_0|}}.
\end{multline*}
where $C_{LO} > 1$.
\end{theorem}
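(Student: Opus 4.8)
The plan is to observe that for this lowest scale, the payoff increments $|g_{t,i\ominus j}|$ are so tiny that we can afford to throw away the switching probabilities entirely. Concretely, I would first bound every probability $P(\,\cdot\,)$ appearing in the sum by $1$, which reduces the left-hand side to $\sum_{t \in A_0} |g_{t,i\ominus j}|$. By the very definition of $A_0$, every $t \in A_0$ satisfies $|g_{t,i\ominus j}| \le 1/\sqrt{T}$, so this sum is at most $|A_0| \cdot \tfrac{1}{\sqrt{T}} = |A_0|/\sqrt{T}$.

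Next I would use the trivial inclusion $A_0 \subseteq [T]$, which gives $|A_0| \le T$, hence $\sqrt{|A_0|} \le \sqrt{T}$, and therefore $|A_0|/\sqrt{T} = \sqrt{|A_0|}\cdot(\sqrt{|A_0|}/\sqrt{T}) \le \sqrt{|A_0|}$. This establishes the first inequality in the claimed bound. The final inequality $\sqrt{|A_0|} \le 20\,C_{LO}\sqrt{|A_0|}$ is immediate from $C_{LO} > 1$ (indeed just $20\,C_{LO} \ge 1$ suffices); its only purpose is cosmetic, namely to put the $A_0$ contribution into the same normalized form $20\,C_{LO}\sqrt{|A_k|}$ that the Littlewood--Offord argument will produce for the higher scales $k \ge 1$, so that the per-scale bounds can be summed uniformly in a later step.

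There is really no hard part here: this is the warm-up case of the multi-scale decomposition, handled purely by the crude bounds ``probability $\le 1$'' and ``$|A_0| \le T$''. The substance of Step~2 lies entirely in the scales $k \ge 1$, where the increments are bounded below and Corollary~\ref{cor:LO} can be invoked; by contrast, on $A_0$ the increments are too small for Littlewood--Offord to give anything useful, which is precisely why a separate, elementary treatment is both necessary and sufficient.
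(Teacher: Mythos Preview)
Your proposal is correct and matches the paper's proof essentially line for line: bound the probability by $1$, use $|g_{t,i\ominus j}|\le 1/\sqrt{T}$ on $A_0$, then $|A_0|\le T$ to get $|A_0|/\sqrt{T}\le\sqrt{|A_0|}$. Your added commentary on why the final $20C_{LO}$ factor is purely cosmetic is also accurate.
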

The proof of this theorem can also be found in Appendix \ref{sec:proofs}.

The real work lies in controlling the switching probabilities for payoffs at intermediate scales. The idea in the proof of the results is to condition on the $\epsilon_t$'s outside $A_k$. Then the probability of interest
is written as a small ball event in terms of the $\epsilon_t$'s in $A_k$. Applying Littlewood-Offord theorem concludes the argument.

\begin{theorem}
\label{thm:A_intermediate}
For any $k \in \{1, \dots, K\}$, we have
\begin{multline*}
\sum_{t \in A_k} |g_{t,i \ominus j}|P\left(\sum_{\tau=1}^{t-1}\epsilon_\tau g_{\tau,i \ominus j} \geq -\sum_{\tau=1}^{t-1} g_{\tau,i \ominus j} , \sum_{\tau=1}^{t}\epsilon_\tau g_{\tau,i \ominus j} \le -\sum_{\tau=1}^{t} g_{\tau,i \ominus j}\right) \\
\leq 20C_{LO}\sqrt{|A_k|} .
\end{multline*}
\end{theorem}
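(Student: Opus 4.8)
The plan is to fix the scale index $k \in \{1,\dots,K\}$ and the pair $i,j$, abbreviate $a_\tau := g_{\tau,i\ominus j} \in [-2,2]$, write $\delta_k := T^{-1/2^k}$, and enumerate the (nonempty, else trivial) set $A_k = \{\tau_1 < \tau_2 < \dots < \tau_m\}$ with $m := |A_k|$. Write $p_\ell := P\big(\tilde{G}_{\tau_\ell-1,i\ominus j}\ge 0,\ \tilde{G}_{\tau_\ell,i\ominus j}\le 0\big)$ for the switching probability at the $\ell$-th element of $A_k$, so that the left-hand side of the theorem equals $\sum_{\ell=1}^m |a_{\tau_\ell}|\, p_\ell$. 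The first step is a purely deterministic observation: since $\tilde{G}_{t,i\ominus j} - \tilde{G}_{t-1,i\ominus j} = (1+\epsilon_t)a_t$ with $1+\epsilon_t \in \{0,2\}$, a switch at $t$ forces $0 \le \tilde{G}_{t-1,i\ominus j} \le -(1+\epsilon_t)a_t \le 2|a_t|$, so the switching event at $t=\tau_\ell$ is contained in $\{\tilde{G}_{\tau_\ell-1,i\ominus j} \in [0,2|a_{\tau_\ell}|]\}$, an event that depends only on $\epsilon_1,\dots,\epsilon_{\tau_\ell-1}$.

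Next I would condition on $\{\epsilon_\tau : \tau \notin A_k\}$. Splitting $\tilde{G}_{\tau_\ell-1,i\ominus j} = \sum_{\tau=1}^{\tau_\ell-1}(1+\epsilon_\tau)a_\tau$, the deterministic part $\sum_{\tau\le\tau_\ell-1}a_\tau$ together with the contribution of the $\epsilon_\tau$ with $\tau\notin A_k$ becomes a constant under this conditioning, while the remaining randomness is exactly $\sum_{p=1}^{\ell-1}\epsilon_{\tau_p}a_{\tau_p}$ (the $A_k$-indices below $\tau_\ell$ are precisely $\tau_1,\dots,\tau_{\ell-1}$). Hence, conditionally, the switching event says $\sum_{p=1}^{\ell-1}\epsilon_{\tau_p}a_{\tau_p}$ lies in a fixed interval of radius $|a_{\tau_\ell}|$. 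Dividing by $\delta_k$ is legitimate because every $t\in A_k$ has $|a_t| > \delta_k$, so the rescaled steps $a_{\tau_p}/\delta_k$ have absolute value $> 1$; applying Corollary~\ref{cor:LO} with $n=\ell-1$ and $\Delta = |a_{\tau_\ell}|/\delta_k$, and then taking expectation over $\{\epsilon_\tau : \tau\notin A_k\}$ (the bound being uniform in the conditioning), yields
\[
p_\ell \;\le\; C_{LO}\,\big(\lfloor |a_{\tau_\ell}|/\delta_k\rfloor + 1\big)\,\frac{1}{\sqrt{\ell-1}}\qquad(\ell \ge 2),
\]
while $p_1 \le 1$ trivially.

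The final step is the bookkeeping showing the $\lfloor\Delta\rfloor+1$ factor does no harm. Multiplying by $|a_{\tau_\ell}|$: when $k \le K-1$ we have $|a_{\tau_\ell}| \le T^{-1/2^{k+1}} = \sqrt{\delta_k}$, so $|a_{\tau_\ell}|^2 \le \delta_k$, and since $\Delta > 1$ we get $|a_{\tau_\ell}|(\lfloor\Delta\rfloor+1) \le 2|a_{\tau_\ell}|^2/\delta_k \le 2$; when $k = K$ the defining property of $K$ gives $\delta_K \ge 1/2$, so $\Delta \le |a_{\tau_\ell}|/\delta_K \le 4$ and $|a_{\tau_\ell}|(\lfloor\Delta\rfloor+1) \le 2\cdot 5 = 10$. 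Either way $|a_{\tau_\ell}|\,p_\ell \le 10\,C_{LO}/\sqrt{\ell-1}$ for $\ell \ge 2$, and $|a_{\tau_1}|\,p_1 \le 2$. Summing and using $\sum_{j=1}^{m-1} j^{-1/2} \le 2\sqrt{m}-1$ gives a total of at most $2 + 10\,C_{LO}(2\sqrt{m}-1) = 2 - 10\,C_{LO} + 20\,C_{LO}\sqrt{m} \le 20\,C_{LO}\sqrt{m}$, since $C_{LO} > 1$. The main obstacle — and the place where the specific nested (iterated-square-root) choice of scales is essential — is exactly this cancellation: at a coarse scale the Littlewood–Offord estimate degrades by the factor $\lfloor |a_t|/\delta_k\rfloor + 1$, but the weight $|a_t|$ is simultaneously at most $\sqrt{\delta_k}$, keeping the product $O(1)$; the leftover $1/\sqrt{\ell-1}$, summed over the $|A_k|$ elements of the scale, then produces the claimed $\sqrt{|A_k|}$.
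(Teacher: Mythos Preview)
Your proof is correct and follows essentially the same approach as the paper: condition on the Rademacher variables outside $A_k$, observe that the switching event traps the remaining Rademacher sum in a short interval, rescale by $T^{1/2^k}$ so the Littlewood--Offord lower-bound hypothesis is met, and then exploit the nested-square-root scale structure to absorb the $\lfloor\Delta\rfloor+1$ factor. The only cosmetic differences are that the paper bounds the sum $\sum_{s\le n}$ rather than $\sum_{s\le n-1}$ (so it does not need to single out the $\ell=1$ term) and tracks constants slightly differently, arriving at $6C_{LO}\sqrt{|A_k|}$ for $k<K$ and $20C_{LO}\sqrt{|A_k|}$ for $k=K$.
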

Again, the proof of this theorem is deferred to Appendix \ref{sec:proofs}.

We finally have all the ingredients in place to control the switching probabilities.

\begin{corollary}
\label{cor:switches}
The following upper bound on the switching probabilities holds.
\begin{multline*}
\sum_{t=1}^T |g_{t,i \ominus j}|P\left(\sum_{\tau=1}^{t-1}\epsilon_\tau g_{\tau,i \ominus j} \geq -\sum_{\tau=1}^{t-1} g_{\tau,i \ominus j} , \sum_{\tau=1}^{t}\epsilon_\tau g_{\tau,i \ominus j} \le -\sum_{\tau=1}^{t} g_{\tau,i \ominus j}\right) \\
\le 20C_{LO} \sqrt{T \log_2 (4 \log_2 T)}.
\end{multline*}
\end{corollary}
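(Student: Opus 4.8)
The plan is to simply assemble the per-scale estimates and then pay a small price for combining the $K+1$ scales. Recall that $[T]$ is the disjoint union $\bigcup_{k=0}^{K} A_k$, so the left-hand side of the claimed inequality splits as
\[
\sum_{t=1}^T |g_{t,i \ominus j}|\, P\!\left(\cdots\right) = \sum_{k=0}^{K} \sum_{t \in A_k} |g_{t,i \ominus j}|\, P\!\left(\cdots\right).
\]
For $k = 0$ I would invoke Theorem~\ref{thm:A_0} and for $k = 1, \dots, K$ Theorem~\ref{thm:A_intermediate}; in both cases the inner sum is bounded by $20 C_{LO} \sqrt{|A_k|}$. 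This reduces the task to bounding $\sum_{k=0}^{K} \sqrt{|A_k|}$.

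Next I would apply Cauchy--Schwarz to the vector $(\sqrt{|A_0|}, \dots, \sqrt{|A_K|})$ against the all-ones vector of length $K+1$:
\[
\sum_{k=0}^{K} \sqrt{|A_k|} \le \sqrt{K+1}\;\sqrt{\sum_{k=0}^{K} |A_k|} = \sqrt{(K+1)\,T},
\]
where the last equality uses that the $A_k$ partition $[T]$, so $\sum_{k=0}^K |A_k| = T$. Combining, the left-hand side is at most $20 C_{LO} \sqrt{(K+1) T}$.

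It remains to control $K+1$. From the definition $K = \arg\min\{k \in \mathbb{N} : T^{-1/2^k} \ge 1/2\}$, the inequality $T^{-1/2^K}\ge 1/2$ rearranges to $2^K \ge \log_2 T$, so $K = \lceil \log_2 \log_2 T \rceil$ and hence $K + 1 \le \log_2 \log_2 T + 2 = \log_2(4 \log_2 T)$. Substituting this bound on $K+1$ yields
\[
\sum_{t=1}^T |g_{t,i \ominus j}|\, P\!\left(\cdots\right) \le 20 C_{LO}\sqrt{T \log_2(4 \log_2 T)},
\]
which is exactly the stated corollary. There is no real obstacle here — the work was already done in Theorems~\ref{thm:A_0} and \ref{thm:A_intermediate}; the only mildly delicate point is the elementary estimate on $K$, where one must be careful with the ceiling so that the constant $4$ inside $\log_2(4\log_2 T)$ is enough to absorb the rounding.
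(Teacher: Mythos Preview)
Your proposal is correct and follows essentially the same route as the paper: split over the scales $A_0,\dots,A_K$, invoke Theorems~\ref{thm:A_0} and~\ref{thm:A_intermediate} to get $20C_{LO}\sqrt{|A_k|}$ on each piece, apply Cauchy--Schwarz with $\sum_k |A_k|=T$, and finish with the elementary bound $K+1\le \log_2(4\log_2 T)$. The only cosmetic difference is that the paper derives the bound on $K$ from the minimality condition $T^{-1/2^{K-1}}<1/2$ rather than via the ceiling, but the two arguments are equivalent.
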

\begin{proof}
Using Theorem~\ref{thm:A_0} and Theorem~\ref{thm:A_intermediate}, we have
\begin{align*}
&\sum_{t=1}^T |g_{t,i \ominus j}|P\left(\sum_{\tau=1}^{t-1}\epsilon_\tau g_{\tau,i \ominus j} \geq -\sum_{\tau=1}^{t-1} g_{\tau,i \ominus j} , \sum_{\tau=1}^{t}\epsilon_\tau g_{\tau,i \ominus j} \le -\sum_{\tau=1}^{t} g_{\tau,i \ominus j}\right) \\
= & \sum_{k=0}^{K} \sum_{t \in A_k} |g_{t,i \ominus j}|P\left(\sum_{\tau=1}^{t-1}\epsilon_\tau g_{\tau,i \ominus j} \geq -\sum_{\tau=1}^{t-1} g_{\tau,i \ominus j} , \sum_{\tau=1}^{t}\epsilon_\tau g_{\tau,i \ominus j} \le -\sum_{\tau=1}^{t} g_{\tau,i \ominus j}\right) \\
\leq & \sum_{k=0}^K 20C_{LO} \sqrt{|A_k|} .
\end{align*}
Since $\sum_{k=0}^K \sqrt{|A_k|} \le \sqrt{K+1} \cdot \sqrt{ \sum_{k=0}^K |A_k| }$ and
$\sum_{k=0}^K |A_k| = T$, we have
$$
\sum_{k=0}^K 20C_{LO} \sqrt{|A_k|} \leq 20C_{LO} \sqrt{(K+1)T}.
$$
By definition of $K$, we have that $T^{-\frac{1}{2^{K-1}}} < \frac{1}{2}$, $K \leq \log_2(\log_2(T)) + 1$ which finishes the proof.
\end{proof}

Thus, $\forall i, j \in \{1, \dots, N, i \neq j\}$, we have
\begin{align*}
&\sum_{t=1}^T |g_{t,i \ominus j}|P\left(\tilde{G}_{t-1,i \ominus j} \geq 0, \tilde{G}_{t,i \ominus j} \le 0 \right) \le 20C_{LO} \sqrt{T \log_2 (4 \log_2 T)},
\end{align*}
which, when plugged into Theorem~\ref{thm:regret2switches}, immediately yields the following corollary.

\begin{corollary}
\label{cor:obliviousregret}
Against an oblivious opponent, both versions --- the single stream version~\eqref{eq:sfp_single} and the fresh-randomization-at-each-round version~\eqref{eq:sfp} --- of sampled fictitious play
enjoy the following bound on the expected regret.
$$
\mathbb{E}\left[ \mathcal{R}_T \right] \le 40 C_{LO} N^2 \sqrt{T \log_2 (4 \log_2 T)} .
$$
\end{corollary}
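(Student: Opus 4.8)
The plan is to assemble the claimed bound directly from the ingredients established in Steps~1 and~2, with essentially no new computation. Recall that Theorem~\ref{thm:regret2switches} reduces the expected regret of the single-stream procedure to a maximum over ordered pairs $i \neq j$ of a weighted sum of downward-crossing probabilities,
\[
\mathbb{E}[\mathcal{R}_T] \le 2N^2 \max_{1 \le i,j \le N} \sum_{t=1}^T |g_{t,i\ominus j}|\, P\big(\tilde{G}_{t-1,i\ominus j} \ge 0,\ \tilde{G}_{t,i\ominus j} \le 0\big),
\]
while Corollary~\ref{cor:switches}, after rewriting the crossing event in terms of the Rademacher partial sums $\sum_\tau \epsilon_\tau g_{\tau,i\ominus j}$ exactly as displayed just before it, bounds each such inner sum by $20C_{LO}\sqrt{T\log_2(4\log_2 T)}$, uniformly in $i,j$. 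The first move is therefore simply to substitute: $\mathbb{E}[\mathcal{R}_T] \le 2N^2 \cdot 20 C_{LO}\sqrt{T\log_2(4\log_2 T)} = 40 C_{LO} N^2 \sqrt{T\log_2(4\log_2 T)}$, which settles version~\eqref{eq:sfp_single}.

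To cover the original version~\eqref{eq:sfp} that draws a fresh subset $\mathbb{S}_t$ at each round, I would invoke the reduction already recorded at the start of Step~1: since the opponent is oblivious the payoff vectors $g_\tau$ are deterministic, so by independence the law of each individual move $k_t$ is unchanged if we reuse a single common Rademacher stream across all rounds. The point worth stating explicitly is that, against an oblivious opponent, the expected regret depends on the player's procedure \emph{only through these one-dimensional marginals}: because $s_{t,-i}$ is fixed we have $\mathbb{E}[\mathcal{R}_T] = \max_k \sum_t g_{t,k} - \sum_t \mathbb{E}[g_{t,k_t}]$, and each term $\mathbb{E}[g_{t,k_t}]$ is a functional of the distribution of $k_t$ alone, not of the joint law of $(k_1,\dots,k_T)$. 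Hence the two versions have identical expected regret and the bound transfers verbatim.

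The last thing to note is that Theorem~\ref{thm:regret2switches} and Corollary~\ref{cor:switches} were established for arbitrary \emph{deterministic} sequences $\{s_{t,-i}\}$; since the resulting bound is uniform over all such sequences, it continues to hold when $s_{t,-i}$ is random but oblivious --- that is, independent of player $i$'s play --- by conditioning on the opponents' realized moves. I do not expect any genuine obstacle here, as every quantitative estimate has already been done; the only place that calls for a line of care is the marginal reduction in the second paragraph, and it is precisely the obliviousness hypothesis that makes expected regret linear in the per-round move distributions, which is what lets the cheaper single-stream analysis suffice for the actual procedure.
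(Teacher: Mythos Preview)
Your proposal is correct and follows essentially the same approach as the paper: substitute Corollary~\ref{cor:switches} into Theorem~\ref{thm:regret2switches} to obtain $2N^2\cdot 20C_{LO}\sqrt{T\log_2(4\log_2 T)}$, and invoke the marginal-distribution argument from Step~1 to pass between the two versions of the procedure. If anything, you are more explicit than the paper in isolating why obliviousness makes $\mathbb{E}[\mathcal{R}_T]$ depend only on the per-round marginals of $k_t$; the paper simply states the equivalence in the text preceding~\eqref{eq:sfp_single} and then remarks that the corollary follows immediately.
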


\subsection{Step 3: From Oblivious to Adaptive Opponents}

Now we consider adaptive opponents. In this setting, we can no longer assume that player $i$ plays against a fixed sequence of payoff vectors $\{g_t\}_{t=1}^T$. Note that $g_{t,k}$ is just shorthand for $u_i(k,s_{t,-i})$ and opponents can react to player $i$'s moves $k_1,\ldots,k_{t-1}$ in selecting their strategy tuple $s_{t,-i}$, possibly making use of their own private randomness. We denote all randomness used collectively by other players over all time periods by $\omega$ which is drawn from some probability space $\Omega$. Thus, $g_t$ is a function $g_t(k_1,\ldots,k_{t-1},\omega)$.
Faced with general adaptive opponents, the single stream version~\eqref{eq:sfp_single} can incur terrible expected regret as stated below.

\begin{theorem}
\label{thm:bad_regret}
The single stream version of the sampled fictitious play procedure~\eqref{eq:sfp_single} can incur linear expected regret against adaptive opponents.
\end{theorem}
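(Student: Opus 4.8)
The plan is to construct an explicit adaptive opponent that exploits the fact that, in the single-stream version~\eqref{eq:sfp_single}, the realized random bits $\epsilon_1,\epsilon_2,\ldots$ are \emph{reused} across all rounds, so that an adaptive opponent who observes player $i$'s moves can, after a short "learning phase", infer enough about the fixed sign pattern $(\epsilon_1,\ldots,\epsilon_{t-1})$ to predict $k_t$ and then punish it. Concretely I would work with $N=2$ strategies (say strategies $1$ and $2$), set the payoff difference $g_{\tau,1\ominus 2}$ adaptively, and arrange that from some round onward the opponent knows which strategy player $i$ is about to pick and assigns payoff $+1$ to the other strategy and $-1$ (or $0$) to the one player $i$ picks. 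Since the best fixed strategy in hindsight then accumulates payoff growing linearly in $T$ while player $i$'s realized payoff is bounded above by a linear term with a strictly smaller slope (in fact negative), the expected regret is $\Omega(T)$.

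The key steps, in order, are: (i) describe the opponent's strategy explicitly as a function of the observed history $k_1,\ldots,k_{t-1}$; (ii) argue that the opponent can reconstruct, or at least predict with certainty, the outcome $k_t$ of the $\arg\max$ in~\eqref{eq:sfp_single} — the point being that $k_t$ is a deterministic function of $\epsilon_1,\ldots,\epsilon_{t-1}$ and the (already-chosen) payoff vectors $g_1,\ldots,g_{t-1}$, all of which the opponent controls or has already seen, so the only hidden quantity is the fixed random string, which the opponent learns bit by bit from player $i$'s realized moves; (iii) feed player $i$ payoffs that are adversarially anti-correlated with the predicted $k_t$; and (iv) compute the resulting regret: $\max_k \sum_t g_{t,k} - \sum_t g_{t,k_t} = \Omega(T)$, and take expectations, noting that the $O(\log\log T)$-length learning phase contributes only lower-order terms.

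The main obstacle is step~(ii): making precise \emph{how} the opponent decodes the fixed random bits from player $i$'s moves, and handling ties in the $\arg\max$ (which are broken by fresh independent randomness, so the opponent cannot predict them). I would sidestep ties by choosing the payoff increments generically — e.g., letting $g_{\tau,1\ominus 2}$ take values that are rationally independent or at least chosen so that the weighted sums $\sum_{\tau=1}^{t-1}(1+\epsilon_\tau)g_{\tau,1\ominus 2}$ are almost surely nonzero — so that the $\arg\max$ is a singleton and hence a deterministic function of the random string. A cleaner alternative, which I would likely adopt to keep the construction transparent, is to have the opponent \emph{not} reconstruct the bits at all but instead directly copy player $i$'s last move: set the payoff vector at round $t$ so that strategy $k_{t-1}$ (player $i$'s previous move) receives payoff $-1$ and the other strategy receives $+1$. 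Since player $i$'s move $k_t$ under~\eqref{eq:sfp_single} depends only on the \emph{fixed} string and the past payoffs, and the past payoffs now depend only on player $i$'s own past moves, one shows inductively that the sequence $(k_t)$ settles into a short cycle (or a fixed point) determined by the random string, on which the opponent's "punish the last move" rule guarantees player $i$ earns average payoff at most $0$ while some fixed strategy earns average payoff bounded away from $0$; this yields the claimed linear regret. The remaining work is then the routine verification that on this eventual cycle the hindsight-optimal fixed strategy beats player $i$ by a constant per round.
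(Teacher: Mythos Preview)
Your high-level intuition is correct---the single stream of Rademacher bits is the exploitable weakness---but step~(ii), which you yourself flag as the main obstacle, is not resolved by either of the two mechanisms you propose.

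For the bit-decoding route: at the moment the opponent must commit to $g_t$, it has observed $k_1,\ldots,k_{t-1}$ and (inductively) can hope to have decoded $\epsilon_1,\ldots,\epsilon_{t-2}$. But $k_t$ in~\eqref{eq:sfp_single} also depends on $\epsilon_{t-1}$, which has not yet been revealed through any observable move. So the opponent is always one bit behind and cannot predict $k_t$ with certainty from decoded bits alone. You never explain how to close this one-step lag.

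For the ``punish the last move'' route: the conclusion you need is that $k_t=k_{t-1}$ on a constant fraction of rounds (equivalently, that the number of switches is at most $cT$ with $c<1/2$). Your assertion that the sequence $(k_t)$ ``settles into a short cycle (or a fixed point)'' is exactly this claim, but you give no argument for it, and it is far from obvious: with $\pm 1$ punishments the payoff difference process is mean-reverting, which tends to \emph{produce} crossings rather than suppress them. Indeed, for ordinary fictitious play against the same opponent the player would switch essentially every round and incur no regret, so the burden is on you to show the single-stream perturbation does not behave similarly.

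The paper closes the lag by a device you do not mention: on odd rounds the opponent plays the zero vector $(0,0)$. Since $(1+\epsilon_{t-1})\cdot 0=0$, the argmax set at round $t$ coincides with that at round $t-1$ regardless of $\epsilon_{t-1}$, so (absent ties) $k_t=k_{t-1}$ deterministically and the opponent, having observed $k_{t-1}$, knows $k_t$ before choosing $g_t$. On even rounds it then gives $0$ to the predicted strategy and a positive payoff $1-0.1^t$ to the other; the perturbation $0.1^t$ (playing the role of your ``rationally independent'' increments) guarantees no nontrivial tie can occur. This makes the player's realized payoff $O(1)$ while the best fixed strategy earns $\Omega(T)$ by pigeonhole. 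The zero-payoff ``synchronization'' round is the missing idea in your proposal.
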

The proof of this theorem can be found at the end of Appendix \ref{sec:proofs}.

However, for the fresh randomness at each round procedure~\eqref{eq:sfp}, we can apply Lemma 4.1 of \citet{CesaBianchi2006} along with Corollary~\ref{cor:obliviousregret} to derive our next result that holds for
adaptive opponents too. There are two conditions that we must verify before we apply that lemma. First, the learning procedure should use independent randomization at different time points. Second, the probability distribution
of $s_{t,i}$ over the $N$ available strategies should be fully determined by $s_{1,-i},\ldots,s_{t-1,-i}$ and should not depend explicitly on player $i$ own previous moves $s_{1,i},\ldots,s_{t-1,i}$. Both of these
conditions are easily seen to hold for sampled fictitious play as defined in~\eqref{eq:sfp} and~\eqref{eq:bs}. Also note that \citet{CesaBianchi2006} consider deterministic adaptive opponents in their Lemma 4.1.
The extension to our case is easy: we first get a high probability (w.r.t. player $i$'s randomness) regret bound for the deterministic adaptive opponent $g_t(k_1,\ldots,k_{t-1},\omega)$ for a fixed $\omega$. Since the bound
holds for every $\omega$ and does not depend on $\omega$, the same high probability bound is true when $\omega$ is drawn from $\Omega$. This leads us to our final result.

\begin{theorem}
For any T, for any $\delta_T > 0$, with probability at least $1-\delta_T$, the actual regret $\mathcal{R}_T$ of sampled fictitious play as defined in~\eqref{eq:sfp} and~\eqref{eq:bs} satisfies, for any adaptive opponent,
$$
\mathcal{R}_T  \leq 40 C_{LO} N^2\sqrt{T \log_2(4\log_2 T)} + \sqrt{\frac{T}{2}\log\frac{1}{\delta_T}} .
$$
\end{theorem}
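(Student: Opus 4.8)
The plan is to deduce the result from the oblivious-opponent bound of Corollary~\ref{cor:obliviousregret} by way of Lemma~4.1 of \citet{CesaBianchi2006}, which upgrades a bound on the \emph{expected} regret against oblivious opponents into a \emph{high-probability} bound on the \emph{actual} regret against adaptive opponents, provided the learner (a)~randomizes independently across rounds and (b)~picks the law of $s_{t,i}$ using only the opponents' past moves $s_{1,-i},\ldots,s_{t-1,-i}$. I would first fix the opponents' private randomness $\omega\in\Omega$, so that the opponents become a \emph{deterministic} adaptive opponent with $g_t=g_t(k_1,\ldots,k_{t-1},\omega)$; since the bound I will obtain does not involve $\omega$, integrating $\omega$ out at the end (Fubini) extends it to general randomized adaptive opponents.

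Next I would check the two hypotheses of Lemma~4.1 for the fresh-randomness procedure~\eqref{eq:sfp}--\eqref{eq:bs}. Hypothesis (a) holds because round $t$ uses a freshly drawn vector $\epsilon^{(t)}_1,\ldots,\epsilon^{(t)}_{t-1}$ of i.i.d.\ Rademacher variables together with an independent uniform tie-breaker, and these are independent across $t$. Hypothesis (b) holds because, by~\eqref{eq:sfp}, $k_t$ is an $\arg\max$ of $k\mapsto\sum_{\tau\in\mathbb{S}_t}u_i(k,s_{\tau,-i})$ with ties broken uniformly at random, so its conditional law given the history is a function of $(s_{\tau,-i})_{\tau\le t-1}$ and of the exogenous randomness defining $\mathbb{S}_t$ only; player $i$'s own past moves $k_1,\ldots,k_{t-1}$ never enter the rule.

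With both hypotheses verified, Lemma~4.1 of \citet{CesaBianchi2006} gives, for the fixed $\omega$ and any $\delta_T>0$, that with probability at least $1-\delta_T$ over player $i$'s randomness,
$$
\mathcal{R}_T \;\le\; \sup_{\{g_t\}_{t=1}^{T}\subseteq[-1,1]^N}\mathbb{E}\!\left[\mathcal{R}_T\right] \;+\; \sqrt{\tfrac{T}{2}\log\tfrac{1}{\delta_T}},
$$
the supremum being over all oblivious payoff sequences and the expectation over player $i$'s randomness alone. Corollary~\ref{cor:obliviousregret}---which, as stated there, applies to the fresh-randomness version~\eqref{eq:sfp} and not merely to the single-stream version---bounds this supremum by $40\,C_{LO}\,N^2\sqrt{T\log_2(4\log_2 T)}$, yielding the claimed inequality for the fixed $\omega$. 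Because the right-hand side is independent of $\omega$, taking the expectation over $\omega\sim\Omega$ of the indicator of this event shows the same bound holds with probability at least $1-\delta_T$ against an arbitrary adaptive opponent.

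The steps involving the structural conditions and the $\omega$-averaging are routine; the one point I would be careful about is that Corollary~\ref{cor:obliviousregret} must be invoked for the fresh-randomness version~\eqref{eq:sfp}, which is legitimate precisely because against an oblivious opponent the two versions induce the same law on $(k_t)_{t\ge1}$ (the reduction noted in Step~1), so the expected-regret bound transfers. There is no deeper obstacle: all the probabilistic content has already been established in Steps~1 and~2, and in particular one must \emph{not} use the single-stream version~\eqref{eq:sfp_single} here, since by Theorem~\ref{thm:bad_regret} it can suffer linear expected regret against adaptive opponents.
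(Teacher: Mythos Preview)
Your proposal is correct and follows essentially the same approach as the paper: verify the two hypotheses of Lemma~4.1 of \citet{CesaBianchi2006} for the fresh-randomness procedure~\eqref{eq:sfp}--\eqref{eq:bs}, invoke Corollary~\ref{cor:obliviousregret} for the oblivious expected-regret bound, and handle randomized adaptive opponents by first fixing $\omega$ and then observing that the resulting high-probability bound is uniform in $\omega$. Your remarks about why Corollary~\ref{cor:obliviousregret} transfers to the fresh-randomness version and why the single-stream version~\eqref{eq:sfp_single} must be avoided here are also exactly the points the paper makes.
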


Now pick $\delta_T = \frac{1}{T^2}$. Consider the events $E_T  = \{ \mathcal{R}_T \ge 12 C_{LO} N^2\sqrt{T \log_2(4\log_2 T)} + \sqrt{T \log T}\}$ with $P(E_T) \le \delta_T$. Since
$\sum_{T=1}^\infty \delta_T < \infty$, we have $\sum_{T=1}^\infty P(E_T) < \infty$. Therefore, using Borel-Cantelli lemma, the event ``infinitely many $E_T$'s occur'' has probability $0$. That is, with probability $1$, we have
$\limsup_{T\to \infty} \frac{\mathcal{R}_T}{T \log T} \le C$ for some constant C. In particular, with probability $1$, $\limsup_{T \to \infty} \frac{\mathcal{R}_T}{T} = 0$, which proves Theorem~\ref{thm:main}.

\section{Conclusion}\label{sec:con}

We proved that a natural variant of fictitious play is Hannan consistent. In the variant we considered, the player plays the best response to moves of her opponents at sampled time points in the history so far.
We considered one particular sampling scheme, namely Bernoulli sampling. It will be interesting to consider other sampling strategies including sampling with replacement. It will also be interesting to consider notions of regret, such as tracking regret \citep[Section 5.2]{CesaBianchi2006}, that are more suitable for non-stationary environments by biasing the sampling to give more importance to recent time points.

\section*{Acknowledgements}

We thank Jacob Abernethy, Gergely Neu and Manfred Warmuth for helpful discussions. 
We acknowledge the support of NSF via CAREER grant IIS-1452099.

\nocite{*}
\bibliography{sampled_fp}

\newpage
\begin{appendices}

\section{Proofs}\label{sec:proofs}
We first present a lemma that helps us in proving Theorem \ref{thm:regret2switches}.
\begin{lemma}\label{lem:btl}
Let $k_t$ and $\tilde{g}_t$ be defined as in~\eqref{eq:sfp_single} and the text following that equation. We have,
$$
\sum_{t=1}^T \tilde{g}_{t,k_{t+1}} \geq \sum_{t=1}^T \tilde{g}_{t,k_{T+1}} = \max_{k \in \{1,\dots,N\}} \sum_{t=1}^T \tilde{g}_{t,k} .
$$
\end{lemma}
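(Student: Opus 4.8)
The plan is to recognize this as the classical ``be-the-leader'' inequality and prove it by a one-line induction on $T$, after first disposing of the equality on the right-hand side. The equality is immediate from the definition of the update: applying~\eqref{eq:sfp_single} at time $T+1$ gives
\[
k_{T+1} \in \arg\max_{k \in \{1,\dots,N\}} \sum_{\tau=1}^{T} (1+\epsilon_\tau) g_{\tau,k} = \arg\max_{k \in \{1,\dots,N\}} \sum_{\tau=1}^{T} \tilde{g}_{\tau,k},
\]
which is exactly the assertion that $\sum_{t=1}^T \tilde{g}_{t,k_{T+1}} = \max_{k} \sum_{t=1}^T \tilde{g}_{t,k}$. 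Note this holds no matter how ties in the argmax are broken, since any choice of $k_{T+1}$ still lies in the argmax set.

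For the inequality I would induct on $T$. The base case $T=1$ reads $\tilde{g}_{1,k_2} \geq \tilde{g}_{1,k_2}$, which is trivial (and equals $\max_k \tilde{g}_{1,k}$ by the equality just established with $T=1$). For the inductive step, assume $\sum_{t=1}^{T-1} \tilde{g}_{t,k_{t+1}} \geq \max_{k} \sum_{t=1}^{T-1} \tilde{g}_{t,k}$. Specializing the right-hand maximum to $k = k_{T+1}$ yields $\sum_{t=1}^{T-1} \tilde{g}_{t,k_{t+1}} \geq \sum_{t=1}^{T-1} \tilde{g}_{t,k_{T+1}}$; adding $\tilde{g}_{T,k_{T+1}}$ to both sides gives
\[
\sum_{t=1}^{T} \tilde{g}_{t,k_{t+1}} \geq \sum_{t=1}^{T} \tilde{g}_{t,k_{T+1}} = \max_{k \in \{1,\dots,N\}} \sum_{t=1}^{T} \tilde{g}_{t,k},
\]
where the last equality is the first paragraph applied at level $T$. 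This closes the induction and proves Lemma~\ref{lem:btl}.

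There is essentially no obstacle here: the argument is purely deterministic given a realization of the Rademacher stream, and the only point requiring a word of care is that $k_t$ is a genuine maximizer only for $t \geq 2$ (at $t=1$ the history is empty and the stated convention makes the argmax all of $\{1,\dots,N\}$) — but the lemma refers only to $k_2,\dots,k_{T+1}$, so this never bites. The genuine work in the companion Theorem~\ref{thm:regret2switches} lies downstream of this lemma: converting the be-the-leader bound, which is phrased in terms of the look-ahead indices $k_{t+1}$, into a bound on the actually-played indices $k_t$, and then accounting for the resulting one-step discrepancies via the $|g_{t,i\ominus j}|$-weighted switching events.
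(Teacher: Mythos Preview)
Your proof is correct and is essentially the same induction as the paper's (the classical ``be-the-leader'' argument, cf.\ Lemma~3.1 in \citet{CesaBianchi2006}): the only cosmetic difference is that you carry the inductive hypothesis in the form $\sum_{t=1}^{T-1}\tilde g_{t,k_{t+1}}\ge \max_k \sum_{t=1}^{T-1}\tilde g_{t,k}$ and then specialize to $k=k_{T+1}$, whereas the paper states it as $\ge \sum_{t=1}^{T-1}\tilde g_{t,k_T}$ and then uses that $k_T$ is a maximizer to pass to $k_{T+1}$. Both reach the same inequality before adding $\tilde g_{T,k_{T+1}}$, and your explicit handling of the right-hand equality is a welcome clarification.
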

\begin{proof}
This is a classical lemma, for example, see Lemma 3.1 in \citep{CesaBianchi2006}. We follow the same idea, i.e, proving through induction but adapt it to handle gains instead of losses. The statement is obvious for $T = 1$. Assume now that 
$$
\sum_{t=1}^{T-1} \tilde{g}_{t,k_{t+1}} \geq  \sum_{t=1}^{T-1} \tilde{g}_{t,k_{T}}.
$$
Since, by definition, $\sum_{t=1}^{T-1} \tilde{g}_{t,k_{T}} \geq \sum_{t=1}^{T-1} \tilde{g}_{t,k_{T+1}}$, the inductive assumption implies 
$$
\sum_{t=1}^{T-1} \tilde{g}_{t,k_{t+1}} \geq \sum_{t=1}^{T-1} \tilde{g}_{t,k_{T+1}}.
$$
Add $\tilde{g}_{T,k_{T+1}}$ to both sides to obtain the result.
\end{proof}

\begin{proof}[Proof of Theorem \ref{thm:regret2switches}]
We will prove a result for Bernoulli sampling with general probabilities, i.e., when $P(\epsilon_t = +1) = \alpha$ where $\alpha$ is not necessarily $1/2$. We will show that
$$
\mathbb{E}\left[ \mathcal{R}_T \right] \leq \frac{N^2}{\alpha} \max\limits_{1 \leq i,j \leq N} \sum_{t=1}^{T}  |g_{t,i \ominus j}|P\left(  \tilde{G}_{t-1,i \ominus j} \geq 0, \tilde{G}_{t,i \ominus j} \le 0 \right)
$$
from which the theorem follows as a special case when $\alpha = 1/2$.

Obviously we have $\mathbb{E}(\tilde{g}_{t,i}) = 2\alpha g_{t,i} 
$ because of the fact that $\mathbb{E}(\epsilon_t) = 2\alpha - 1$. Furthermore, $\mathbb{E}[\tilde{g}_{t,k_t}|\epsilon_1, \dots, \epsilon_{t-1}] =  2\alpha g_{t,k_t}$ because $k_t$ is fully determined by past randomness $\epsilon_1, \dots, \epsilon_{t-1}$ and past payoffs $g_1, \dots, g_{t-1}$ that are given. This implies that $ \mathbb{E}[\tilde{g}_{t,k_t}] = \mathbb{E}\left[\mathbb{E}[\tilde{g}_{t,k_t}|\epsilon_1, \dots, \epsilon_{t-1}]\right]=  2\alpha \mathbb{E}[g_{t,k_t}]$. We now have,
\begin{align*}
\mathbb{E}\left[ \mathcal{R}_T \right] &= \max_{k \in \{1,\dots,N\}} \sum_{t=1}^T g_{t,k} - \mathbb{E}\left[ \sum_{t=1}^T g_{t,k_t} \right] \\ 
&=  \frac{1}{2\alpha} \max_{k \in \{1,\dots,N\}} \mathbb{E}\left[\sum_{t=1}^T \tilde{g}_{t,k}\right] - \frac{1}{2\alpha}\mathbb{E}\left[\sum_{t=1}^T \tilde{g}_{t,k_t}\right] \\
&\leq \frac{1}{2\alpha} \mathbb{E}\left[\max_{k \in \{1,\dots,N\}} \sum_{t=1}^T \tilde{g}_{t,k} - \sum_{t=1}^T \tilde{g}_{t,k_t} \right] .
\end{align*}
Using Lemma~\ref{lem:btl}, we can further upper bound the last expression as follows,
\begin{align*}
\mathbb{E}\left[ \mathcal{R}_T \right] &\leq \frac{1}{2\alpha} \mathbb{E}\left[\sum_{t=1}^T \tilde{g}_{t,k_{t+1}} - \sum_{t=1}^T \tilde{g}_{t,k_{t}}\right] \\
&= \frac{1}{2\alpha} \sum_{t=1}^T \mathbb{E}\left[(1+\epsilon_{t})(g_{t,k_{t+1}} - g_{t,k_{t}})\right] \\
&\leq \frac{1}{2\alpha} \sum_{t=1}^T \mathbb{E}\left[(1+\epsilon_t)|g_{t,k_{t+1}} - g_{t,k_{t}}|\right] \\
&\leq \frac{1}{\alpha} \sum_{t=1}^T \mathbb{E}\left[|g_{t,k_t} - g_{t,k_{t+1}}|\right] \\
&= \frac{1}{\alpha} \sum_{t=1}^T \sum\limits_{1 \leq i,j \leq N} \mathbb{E}\left[|g_{t,i} - g_{t,j}| 1_{(k_t =i, k_{t+1}=j)}\right] \\
&= \frac{1}{\alpha}  \sum\limits_{1 \leq i,j \leq N} \sum_{t=1}^T \mathbb{E}\left[|g_{t,i} - g_{t,j}| 1_{(k_t =i, k_{t+1}=j)}\right] \\
&\leq \frac{N^2}{\alpha} \max\limits_{1 \leq i,j \leq N} \sum_{t=1}^{T} |g_{t,i} - g_{t,j}|P(k_t = i, k_{t+1} = j) \\ 
&\leq \frac{N^2}{\alpha} \max\limits_{1 \leq i,j \leq N} \sum_{t=1}^{T} |g_{t,i} - g_{t,j}|P\left(\tilde{G}_{t-1,i} \geq \tilde{G}_{t-1,j}, \tilde{G}_{t,i} \le \tilde{G}_{t,j}\right) \\
&= \frac{N^2}{\alpha} \max\limits_{1 \leq i,j \leq N} \sum_{t=1}^{T}  |g_{t,i \ominus j}|P\left(  \tilde{G}_{t-1,i \ominus j} \geq 0, \tilde{G}_{t,i \ominus j} \le 0 \right).
\end{align*}
\end{proof}

The next lemma is useful to determine the appropriate constant in the Littlewood-Offord Theorem.
\begin{lemma}
\label{lemma:bernoullicoeff}
Suppose $X_1,\dots,X_t$ are i.i.d.\ Bernoulli random variables that take value of $1$ with probability $\alpha$ and $0$ with probability $1 - \alpha$. If $t > \max(\frac{2}{1-\alpha},\frac{2}{\alpha}) \geq \max(\frac{2\alpha}{1-\alpha},\frac{2}{\alpha}) $,
then for all $k$,
$$
P\left(\sum_{i=1}^t X_i = k \right) \leq \frac{e}{2\pi}\times \sqrt{\frac{2}{\alpha(1-\alpha)}} \times t^{-\frac{1}{2}}  .
$$
\end{lemma}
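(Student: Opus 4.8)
The plan is to observe that $\sum_{i=1}^t X_i \sim \mathrm{Binomial}(t,\alpha)$, so the probability to bound is the maximum value of a binomial probability mass function, attained at the mode $m := \lfloor (t+1)\alpha \rfloor$. From the mass ratio $P(\sum_i X_i = k)/P(\sum_i X_i = k-1) = \frac{(t-k+1)\alpha}{k(1-\alpha)}$ one reads off that this ratio is $\ge 1$ precisely when $k \le (t+1)\alpha$, hence $t\alpha + \alpha - 1 \le m \le t\alpha + \alpha$. Together with the hypotheses $t\alpha > 2$ and $t(1-\alpha) > 2$ --- which also force $t > 4$, since $\max\!\left(\tfrac{2}{1-\alpha},\tfrac{2}{\alpha}\right)$ is at least the average $\tfrac{1}{1-\alpha}+\tfrac{1}{\alpha} = \tfrac{1}{\alpha(1-\alpha)} \ge 4$ --- this gives $m > 1$ and $t - m > 1$, so Stirling's bounds apply to all three factorials below.

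Next I would plug the two-sided Stirling estimates $n! \le e\sqrt{n}\,(n/e)^n$ and $n! \ge \sqrt{2\pi n}\,(n/e)^n$ into $\binom{t}{m} = t!/(m!\,(t-m)!)$; the $e^{-t}$ factors cancel and one is left with
$$
P\!\left(\sum_{i=1}^t X_i = m\right) = \binom{t}{m}\alpha^m(1-\alpha)^{t-m} \le \frac{e}{2\pi}\sqrt{\frac{t}{m(t-m)}}\left(\frac{t\alpha}{m}\right)^{m}\left(\frac{t(1-\alpha)}{t-m}\right)^{t-m}.
$$
The last product is at most $1$: by $\log x \le x - 1$ applied termwise, $m\log\tfrac{t\alpha}{m} + (t-m)\log\tfrac{t(1-\alpha)}{t-m} \le (t\alpha - m) + (t(1-\alpha) - (t-m)) = 0$ (equivalently, it equals $e^{-t\,D(m/t\,\|\,\alpha)} \le 1$ by nonnegativity of the Kullback--Leibler divergence). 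Hence everything reduces to showing $\tfrac{t}{m(t-m)} \le \tfrac{2}{t\alpha(1-\alpha)}$, i.e.\ $2\,m(t-m) \ge t^2\alpha(1-\alpha)$.

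For this last inequality I would reuse the mode bounds: $m \ge t\alpha - (1-\alpha)$ and $t - m \ge t(1-\alpha) - \alpha$, both right-hand sides being positive, so that $m(t-m) \ge \big(t\alpha - (1-\alpha)\big)\big(t(1-\alpha) - \alpha\big)$, which expands to the clean form $\alpha(1-\alpha)(t+1)^2 - t$. Thus $2\,m(t-m) \ge t^2\alpha(1-\alpha)$ is implied by $\alpha(1-\alpha)\,(t^2 + 4t + 2) \ge 2t$, and since $\tfrac{2}{t} < \alpha < 1 - \tfrac{2}{t}$ forces $\alpha(1-\alpha) > \tfrac{2}{t}\big(1 - \tfrac{2}{t}\big) = \tfrac{2(t-2)}{t^2}$, it is enough that $(t-2)(t^2+4t+2) \ge t^3$, i.e.\ $t^2 - 3t - 2 \ge 0$, which holds for $t > 4$. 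Chaining the three displayed estimates then yields, for every $k$, the bound $P(\sum_i X_i = k) \le \tfrac{e}{2\pi}\sqrt{\tfrac{2}{t\alpha(1-\alpha)}} = \tfrac{e}{2\pi}\sqrt{\tfrac{2}{\alpha(1-\alpha)}}\,t^{-1/2}$.

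The step I expect to be the real obstacle is the final quantitative comparison: Stirling, the $\log x \le x - 1$ trick, and locating the mode within distance $1$ of $t\alpha$ are all routine, but one must be careful enough with constants that the rather weak bound $\alpha(1-\alpha) > 2(t-2)/t^2$ extracted from the hypothesis still closes the inequality. It does, but only barely, and crucially relies on the hypothesis also forcing $t > 4$. (Specialized to $\alpha = 1/2$, this lemma is what makes the constant $C_{LO}$ in Corollary~\ref{cor:LO} explicit, via an $O(n^{-1/2})$ bound on $\binom{n}{\lfloor n/2 \rfloor}2^{-n}$.)
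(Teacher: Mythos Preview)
Your proof is correct and follows essentially the same approach as the paper: locate the binomial mode $m=\lfloor (t+1)\alpha\rfloor$, apply the two-sided Stirling bounds to $\binom{t}{m}$, show the term $(t\alpha/m)^m(t(1-\alpha)/(t-m))^{t-m}\le 1$, and use the hypothesis to establish $m(t-m)\ge \tfrac{1}{2}t^2\alpha(1-\alpha)$. The only cosmetic differences are that the paper handles the $\le 1$ step by a direct calculus maximization of $f(x)=\alpha^x(1-\alpha)^{t-x}/(x^x(t-x)^{t-x})$ rather than your KL/$\log x\le x-1$ argument, and it bounds $\sqrt{m(t-m)}$ by checking the two endpoints of the mode interval (using $\alpha-1/t\ge\alpha/2$ etc.) rather than your algebraic expansion to $\alpha(1-\alpha)(t+1)^2-t$ and the reduction to $t^2-3t-2\ge 0$.
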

\begin{proof}
Note that for $0 \leq k < t$,
\[
\frac{P(x=k+1)}{P(x=k)} = \frac{{t \choose k+1} \alpha^{k+1} (1-\alpha)^{t-k-1}}{{t \choose k} \alpha^{k} (1-\alpha)^{t-k}} = \frac{\alpha (t-k) }{(1-\alpha)(k+1)}.
\]
Therefore, the maximum probability of Bernoulli distribution $P(X = k)$ is achieved when $k = \hat{k} = \lfloor(t+1)\alpha \rfloor$ where $\lfloor x \rfloor$ denotes the integral part of x. Clearly $\hat{k} \in [t\alpha-1, (t+1)\alpha]$. Thus, 
\begin{align*}
\sqrt{\hat{k}(t-\hat{k})} &\geq \min\Big(\sqrt{(t\alpha-1)(t-t\alpha+1)}, \sqrt{(t+1)\alpha(t-t\alpha-\alpha)}\Big) \\
&= t \times \min\Big(\sqrt{(\alpha-\frac{1}{t})(1-\alpha+\frac{1}{t})}, \sqrt{(1+\frac{1}{t})\alpha(1-\alpha-\frac{\alpha}{t})}\Big) \\
&\geq t \times \min\Big(\sqrt{(\alpha-\frac{\alpha}{2})(1-\alpha)}, \sqrt{\alpha(1-\alpha-\frac{1-\alpha}{2})}\Big) \\
&= \sqrt{\frac{\alpha(1-\alpha)}{2}}t.
\end{align*}
With this preliminary inequality, we are ready to prove the lemma.
\begin{align*} 
P(\sum_{i=1}^{t}X_i = k) &\leq P(\sum_{i=1}^{t}X_i = \hat{k}) \\
&= {t \choose \hat{k}} \times \alpha^{\hat{k}}(1-\alpha)^{t-\hat{k}} \\
&= \frac{t!}{(\hat{k})!(t-\hat{k})!} \times \alpha^{\hat{k}}(1-\alpha)^{t-\hat{k}}\\
&\leq \frac{t^{t+\frac{1}{2}}e^{1-t}}{\big(\sqrt{2\pi}{\hat{k}}^{\hat{k} + \frac{1}{2}}e^{-\hat{k}}\big)\big(\sqrt{2\pi}{(t-\hat{k})}^{t-\hat{k} + \frac{1}{2}}e^{-(t-\hat{k})}\big)} \times \alpha^{\hat{k}}(1-\alpha)^{t-\hat{k}} \\
&= \frac{e}{2\pi} \times \frac{1}{\sqrt{\hat{k}(t-\hat{k})}} \times \frac{t^{t+\frac{1}{2}}}{\hat{k}^{\hat{k}}(t-\hat{k})^{t-\hat{k}}} \times \alpha^{\hat{k}}(1-\alpha)^{t-\hat{k}} \\
&\leq \frac{e}{2\pi} \times \sqrt{\frac{2}{\alpha(1-\alpha)}}\times t^{t-\frac{1}{2}} \times \frac{\alpha^{\hat{k}}(1-\alpha)^{t-\hat{k}}}{\hat{k}^{\hat{k}}(t-\hat{k})^{t-\hat{k}}}.
\end{align*}
Let $f(x) = \frac{\alpha^{x}(1-\alpha)^{t-x}}{x^{x}(t-x)^{t-x}}, f'(x) = \Big(\log(\frac{\alpha}{1-\alpha}) - \log(\frac{x}{t-x})\Big) \times f(x)
$. Obviously $f'(x)$ is 0 when $x = \alpha t$, positive when $x < \alpha t$, and negative when $x > \alpha t$. Thus, 
$$
f(x) \leq \frac{\alpha^{\alpha t}(1-\alpha)^{t-\alpha t}}{(\alpha t)^{\alpha t}(t-\alpha t)^{t-\alpha t}} = t^{-t}.
$$
Hence,
\begin{align*}
P(\sum_{i=1}^{t}X_i = a) &\leq \frac{e}{2\pi} \times \sqrt{\frac{2}{\alpha(1-\alpha)}} \times t^{t-\frac{1}{2}} \times f(\hat{k}) \\
&\leq \frac{e}{2\pi}\times \sqrt{\frac{2}{\alpha(1-\alpha)}} \times t^{-\frac{1}{2}}.
\end{align*}
\end{proof}

\begin{proof}[Proof of Corollary~\ref{cor:LO}]
Note that when $\alpha = \frac{1}{2}$, Lemma \ref{lemma:bernoullicoeff} provides a bound on $\frac{S(n)}{2^n}$. Plug in $\alpha = \frac{1}{2}$ to Lemma \ref{lemma:bernoullicoeff} and combine with Theorem \ref{thm:LO}, we know that if $n > 4$, $C_{LO} = \frac{\sqrt{2}e}{\pi}$ will suffice. If $n \leq 4$, $\frac{2\sqrt{2}e}{\pi} \times n^{-\frac{1}{2}} > 1$ and Lemma \ref{lemma:bernoullicoeff} still holds. 
\end{proof}

\begin{proof}[Proof of Theorem~\ref{thm:A_0}]
We write $|A|$ to denote the cardinality of a finite set A.
\begin{align*}
&\sum_{t \in A_0} |g_{t,i \ominus j}|P(\sum_{\tau=1}^{t-1}\epsilon_\tau g_{\tau,i \ominus j} \geq -\sum_{\tau=1}^{t-1} g_{\tau,i \ominus j} , \sum_{\tau=1}^{t}\epsilon_\tau g_{\tau,i \ominus j} \le -\sum_{\tau=1}^{t} g_{\tau,i \ominus j}) \\
\leq &\sum_{t \in A_0} \frac{1}{\sqrt{T}} \times 1 
= \frac{|A_0|}{\sqrt{T}} 
\leq \sqrt{|A_0|}. \qedhere
\end{align*}
\end{proof}

\begin{proof}[Proof of Theorem~\ref{thm:A_intermediate}]
We write $\epsilon$ with a subset of $[T]$ as subscript to denote $\epsilon_t$'s at times that are within the subset. For example, $\epsilon_{[T]} = \{\epsilon_1, \dots, \epsilon_T\}$. We also write $\epsilon_{-A}$ to denote the set of $\epsilon_t$'s that are within the complement of $A$ with respect to $[T]$.

\noindent
{\bf Case I:} $k \in \{1,\ldots,K-1\}$
\begin{align*}
&\sum_{t \in A_k} |g_{t,i \ominus j}|P\left(\sum_{\tau=1}^{t-1}\epsilon_\tau g_{\tau,i \ominus j} \geq -\sum_{\tau=1}^{t-1} g_{\tau,i \ominus j} , \sum_{\tau=1}^{t}\epsilon_\tau g_{\tau,i \ominus j} \le -\sum_{\tau=1}^{t} g_{\tau,i \ominus j}\right) \\
= &\sum_{t \in A_k} |g_{t,i \ominus j}|\mathbb{E}_{\epsilon_{[T]}}[\mathbbm{1}_{( \sum_{\tau=1}^{t-1}\epsilon_\tau g_{\tau,i \ominus j} \geq -\sum_{\tau=1}^{t-1} g_{\tau,i \ominus j} , \sum_{\tau=1}^{t}\epsilon_\tau g_{\tau,i \ominus j} \le -\sum_{\tau=1}^{t} g_{\tau,i \ominus j} )}] \\
= &\sum_{t \in A_k} |g_{t,i \ominus j}|\mathbb{E}_{\epsilon_{-A_k}}\left[ \mathbb{E}_{\epsilon_{A_k}}[\mathbbm{1}_{( \sum_{\tau=1}^{t-1}\epsilon_\tau g_{\tau,i \ominus j} \geq -\sum_{\tau=1}^{t-1} g_{\tau,i \ominus j} , \sum_{\tau=1}^{t}\epsilon_\tau g_{\tau,i \ominus j} \le -\sum_{\tau=1}^{t} g_{\tau,i \ominus j} )}|\epsilon_{-A_k}]\right] \\
= &\mathbb{E}_{\epsilon_{-A_k}}\left[ \sum_{t \in A_k} |g_{t,i \ominus j}|\mathbb{E}_{\epsilon_{A_k}}[\mathbbm{1}_{( \sum_{\tau=1}^{t-1}\epsilon_\tau g_{\tau,i \ominus j} \geq -\sum_{\tau=1}^{t-1} g_{\tau,i \ominus j} , \sum_{\tau=1}^{t}\epsilon_\tau g_{\tau,i \ominus j} \le -\sum_{\tau=1}^{t} g_{\tau,i \ominus j} )}|\epsilon_{-A_k}]\right] \\
\leq & \sup_{\epsilon_{-A_k}} \sum_{t \in A_k} |g_{t,i \ominus j}|\mathbb{E}_{\epsilon_{A_k}}[\mathbbm{1}_{( \sum_{\tau=1}^{t-1}\epsilon_\tau g_{\tau,i \ominus j} \geq -\sum_{\tau=1}^{t-1} g_{\tau,i \ominus j} , \sum_{\tau=1}^{t}\epsilon_\tau g_{\tau,i \ominus j} \le -\sum_{\tau=1}^{t} g_{\tau,i \ominus j} )}|\epsilon_{-A_k}].
\end{align*}
Let $A_k = \{t_{k,1}, \dots, t_{k,{|A_k|}}\}$ with elements listed in increasing order of time index. Also, define
$$D_n = D_n(\epsilon_{-A_k}) = -\sum_{\tau=1, \tau \in -A_k}^{t_{k,n}-1} (1+\epsilon_{\tau}) g_{\tau,i \ominus j}.
$$
Then, we have
\begin{align*}
&\sum_{t \in A_k} |g_{t,i \ominus j}|\mathbb{E}_{\epsilon_{A_k}}[\mathbbm{1}_{( \sum_{\tau=1}^{t-1}\epsilon_\tau g_{\tau,i \ominus j} \geq -\sum_{\tau=1}^{t-1} g_{\tau,i \ominus j} , \sum_{\tau=1}^{t}\epsilon_\tau g_{\tau,i \ominus j} \le -\sum_{\tau=1}^{t} g_{\tau,i \ominus j} )}|\epsilon_{-A_k}] \\
= &\sum_{n=1}^{|A_k|} |g_{t_{k,n},i \ominus j}|\mathbb{E}_{\epsilon_{A_k}}[\mathbbm{1}_{(\sum_{s=1}^{n-1}\epsilon_{t_{k,s}} g_{t_{k,s},i \ominus j} \geq -\sum_{s=1}^{n-1} g_{t_{k,s},i \ominus j} + D_n , \sum_{s=1}^{n}\epsilon_{t_{k,s}} g_{s,i \ominus j} \le -\sum_{s=1}^{n} g_{t_{k,s},i \ominus j} + D_n)}|\epsilon_{-A_k}] \\
= & \sum_{n=1}^{|A_k|} |g_{t_{k,n},i \ominus j}|P\left(\sum_{s=1}^{n-1}\epsilon_{t_{k,s}} g_{t_{k,s},i \ominus j} \geq -\sum_{s=1}^{n-1} g_{t_{k,s},i \ominus j} + D_n , \right. \\
&\left.
\phantom{\sum_{n=1}^{|A_k|} |g_{t_{k,n},i \ominus j}|P} 
\sum_{s=1}^{n}\epsilon_{t_{k,s}} g_{t_{k,s},i \ominus j} \le -\sum_{s=1}^{n} g_{t_{k,s},i \ominus j} + D_n \middle |\epsilon_{-A_k}\right).
\end{align*}
By definition of the set $A_k$, we have $|g_{t_{k,s},i \ominus j}| \geq T^{-\frac{1}{2^{k}}}$, so $T^{\frac{1}{2^{k}}} |g_{t_{k,s},i \ominus j}| \geq 1$. Let $M_k = T^{\frac{1}{2^{k}}}$. Then, we have
\begin{align*}
&\sum_{n=1}^{|A_k|} |g_{t_{k,n},i \ominus j}|P\left(\sum_{s=1}^{n-1}\epsilon_{t_{k,s}} g_{t_{k,s},i \ominus j} \geq -\sum_{s=1}^{n-1} g_{t_{k,s},i \ominus j} + D_n , \right. \\
& \left.
\phantom{\sum_{n=1}^{|A_k|} |g_{t_{k,n},i \ominus j}|P} 
\sum_{s=1}^{n}\epsilon_{t_{k,s}} g_{t_{k,s},i \ominus j} \le -\sum_{s=1}^{n} g_{t_{k,s},i \ominus j} + D_n  \middle |\epsilon_{-A_k} \right) \\
= &\sum_{n=1}^{|A_k|} |g_{t_{k,n},i \ominus j}|P\left(\sum_{s=1}^{n-1}\epsilon_{t_{k,s}} g_{t_{k,s},i \ominus j}M_k \geq -\sum_{s=1}^{n-1} g_{t_{k,s},i \ominus j}M_k + D_n M_k \right., \\
&\left.
\phantom{\sum_{n=1}^{|A_k|} |g_{t_{k,n},i \ominus j}|P} 
\sum_{s=1}^{n}\epsilon_{t_{k,s}} g_{t_{k,s},i \ominus j}M_k \le -\sum_{s=1}^{n} g_{t_{k,s},i \ominus j}M_k + D_n M_k  \middle |\epsilon_{-A_k} \right) \\
\leq &\sum_{n=1}^{|A_k|} |g_{t_{k,n},i \ominus j}| P\left(\sum_{s=1}^{n}\epsilon_{t_{k,s}} g_{t_{k,s},i \ominus j}M_k \in B_{k,n}  \middle |\epsilon_{-A_k} \right)
\end{align*}
where
$$B_{k,n} = \left[ -\sum_{s=1}^{n} g_{t_{k,s},i \ominus j}M_k + D_nM_k - 2|g_{t_{k,n},i \ominus j}|M_k, -\sum_{s=1}^{n} g_{t_{k,s},i \ominus j}M_k + D_nM_k  \right]$$
is a one-dimensional closed ball with radius $\Delta = |g_{t_{k,n},i \ominus j}|M_k$. Note that this ball is fixed given $\epsilon_{-A_k}$. Since $|g_{t_{k,s},i \ominus j}|M_k \ge 1$, we can apply Corollary \ref{cor:LO} to get 
$$
P\left(\sum_{s=1}^{n}\epsilon_{t_{k,s}} g_{t_{k,s},i \ominus j}M_k \in B_{k,n} \middle |\epsilon_{-A_k} \right) \leq \frac{C_{LO}(\Delta+1)}{\sqrt{n}} = \frac{C_{LO}(|g_{t_{k,n},i \ominus j}|M_k+1)}{\sqrt{n}}.
$$
Now we continue the derivation,
\begin{align*}
&\sum_{n=1}^{|A_k|} |g_{t_{k,n},i \ominus j}| P\left(\sum_{s=1}^{n}\epsilon_{t_{k,s}} g_{t_{k,s},i \ominus j}M_k \in B_{k,n} \middle |\epsilon_{-A_k} \right) \\
\leq & \sum_{n=1}^{|A_k|} |g_{t_{k,n},i \ominus j}|\frac{C_{LO}(|g_{t_{k,n},i \ominus j}|M_k+1)}{\sqrt{n}} \\
\leq & C_{LO}\left(\sum_{n=1}^{|A_k|} \frac{|g_{t_{k,n},i \ominus j}|^2M_k}{\sqrt{n}} +  \sum_{n=1}^{|A_k|} \frac{2}{\sqrt{n}}\right). 
\end{align*}
Since we have $|g_{t_{k,n},i \ominus j}| < T^{-\frac{1}{2^{k+1}}}$, $|g_{t_{k,n},i \ominus j}|^2T^{\frac{1}{2^{k}}} = |g_{t_{k,n},i \ominus j}|^2 M_k < 1$. Thus we have the bound,
\begin{equation*}
C_{LO}\left(\sum_{n=1}^{|A_k|} \frac{|g_{t_{k,n},i \ominus j}|^2M_k}{\sqrt{n}} +  \sum_{n=1}^{|A_k|} \frac{2}{\sqrt{n}}\right) 
\leq  3C_{LO}\sum_{n=1}^{|A_k|} \frac{1}{\sqrt{n}} \leq 6C_{LO}\sqrt{|A_k|}  .
\end{equation*}

\noindent
{\bf Case II:} $k=K$. Similar to the previous case, we have
\begin{align*}
&\sum_{t \in A_{K}} |g_{t,i \ominus j}|P\left(\sum_{\tau=1}^{t-1}\epsilon_\tau g_{\tau,i \ominus j} \geq -\sum_{\tau=1}^{t-1} g_{\tau,i \ominus j} , \sum_{\tau=1}^{t}\epsilon_\tau g_{\tau,i \ominus j} \le -\sum_{\tau=1}^{t} g_{\tau,i \ominus j} \right) \\
\leq & \sup_{\epsilon_{-A_K}} \sum_{t \in A_K} |g_{t,i \ominus j}|\mathbb{E}_{\epsilon_{A_k}}[\mathbbm{1}_{( \sum_{\tau=1}^{t-1}\epsilon_\tau g_{\tau,i \ominus j} \geq -\sum_{\tau=1}^{t-1} g_{\tau,i \ominus j} , \sum_{\tau=1}^{t}\epsilon_\tau g_{\tau,i \ominus j} \le -\sum_{\tau=1}^{t} g_{\tau,i \ominus j} )}|\epsilon_{-A_K}]
\end{align*}
and writing the elements of $A_K$ in increasing order as $\{t_{K,1},\ldots,t_{K,|A_K|}\}$, we get
\begin{align*}
&\sum_{t \in A_K} |g_{t,i \ominus j}|\mathbb{E}_{\epsilon_{A_K}}[\mathbbm{1}_{( \sum_{\tau=1}^{t-1}\epsilon_\tau g_{\tau,i \ominus j} \geq -\sum_{\tau=1}^{t-1} g_{\tau,i \ominus j} , \sum_{\tau=1}^{t}\epsilon_\tau g_{\tau,i \ominus j} \le -\sum_{\tau=1}^{t} g_{\tau,i \ominus j} )}|\epsilon_{-A_K}] \\
\leq &\sum_{n=1}^{|A_K|} |g_{t_{K,n},i \ominus j}| P\left( \sum_{s=1}^{n}\epsilon_{t_{K,s}} g_{t_{K,s},i \ominus j}M_K \in B_{K,n} \right)
\end{align*}
where
$$D_n = D_n(\epsilon_{-A_K}) = -\sum_{\tau=1, \tau \in \{-A_{K}\}}^{t_{K,n}-1} (1+\epsilon_{\tau}) g_{\tau,i \ominus j}  ,$$
$M_K = T^{\frac{1}{2^K}} \leq 2$, and
$$B_{K,n} = \left[-\sum_{s=1}^{n} g_{t_{K,s},i \ominus j}M_K + D_nM_K - 2|g_{t_{K,n},i \ominus j}|M_K, -\sum_{s=1}^{n} g_{t_{K,s},i \ominus j}M_K + D_nM_K\right]$$ is a one-dimensional closed ball with radius $\Delta = |g_{t_{K,n},i \ominus j}|M_K$.
Note that this ball is fixed given $\epsilon_{-A_K}$ and hence, we can apply Corollary \ref{cor:LO} to get
\begin{align*}
&\sum_{n=1}^{|A_K|} |g_{t_{K,n},i \ominus j}| P\left(\sum_{s=1}^{n}\epsilon_{t_{K,s}} g_{t_{K,s},i \ominus j}M_K \in B_{K,n} \right) \\
\leq & \sum_{n=1}^{|A_K|} |g_{t_{K,n},i \ominus j}|\frac{C_{LO}(|g_{t_{K,n},i \ominus j}|M_K+1)}{\sqrt{n}} \\
\leq & C_{LO}(4M_K+2)\sum_{n=1}^{|A_K|} \frac{1}{\sqrt{n}} \leq 20C_{LO}\sqrt{|A_{K}|} .
\end{align*}
Combining the two cases proves the theorem.
\end{proof}

\begin{proof}[Proof of Theorem~\ref{thm:bad_regret}]
Consider a game with two strategies, i.e., $N=2$. We refer to player $i$ as the ``player'' and the other players collectively as the ``environment". On odd rounds, the environment plays payoff vector $(0,0)$. This ensures that after odd rounds, the environment will know exactly which strategy the player will choose as long as there is no tie in the player's sampled cumulative payoffs, because no matter whether the Rademacher random variable is $-1$ or $+1$, the next strategy played will be the same as the strategy the player just played. On even rounds $t$, the environment plays the payoff vector $(0,1-0.1^t)$ if the player chose the first strategy in the previous round, and $(1-0.1^t,0)$ if the player chose the second strategy in the previous round. Under this scenario, we make a critical observation that, as long as the set of sampled time points is not empty, which happens with probability $(\frac{1}{2})^{t-1}$ on round t, there will not be a tie in the cumulative payoffs of the two strategies. Moreover, without a tie, the player will not be able to switch strategy on even rounds so will not accumulate any payoff. Therefore, the total payoff acquired by the player by following sampled fictitious play procedure will be at most $\sum_{t=1}^\infty (\frac{1}{2})^{t-1} = 2$. However, as evident from the environment's procedure, the total payoff for two strategies is at least $0.45T$ and thus the best strategy has a payoff no less than $0.225T$ because of the pigeonhole principle. Hence, the expected regret for the player is at least $0.225T - 2$, which is linear in $T$. 
\end{proof}

\newpage

\begin{center}
\Large{Supplementary Material for ``Sampled Fictitious Play is Hannan Consistent''}
\end{center}

\section{Counterexample Showing Polynomial $N$ Dependence}\label{sec:counterexample}
In this section, we present a counterexample which shows that the sampled fictitious play algorithm ~\eqref{eq:sfp} with Bernoulli sampling~\eqref{eq:bs} has a lower bound of $\Omega(N)$ on its expected regret when $T$ is $2N$.
This is consistent with a lower bound for the expected regret of order $\Omega(\sqrt{NT})$. However, we are unable to extend the construction to an arbitrary $T$.
This counterexample is from \citep{Warmuth15} and we learned it in private communication with Manfred Warmuth and Gergely Neu.
\begin{theorem}[\cite{Warmuth15}]\label{thm:counterexample}
The sampled fictitious play algorithm has expected regret of $\Omega(N)$ when $T$ is $2N$ and $N \rightarrow \infty$.
\end{theorem}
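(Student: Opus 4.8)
The plan is to exhibit an explicit hard instance --- a sequence of payoff vectors for the $N$ strategies over the $T=2N$ rounds (adaptive if necessary, though a carefully rotated oblivious sequence may suffice) --- on which Sampled Fictitious Play incurs expected regret $\Omega(N)$. To analyze it I would work directly with the sampled cumulative differences $\tilde G_{t-1,i\ominus j}$ introduced in Step~1, which already encode everything needed to compute the law of the player's move $k_t$ and hence $\mathbb{E}[g_{t,k_t}]$, and I would quantify the player's inability to concentrate using the same Littlewood--Offord / anti-concentration estimates (Theorem~\ref{thm:LO}, Corollary~\ref{cor:LO}) that drive the upper bound --- here used in the opposite direction: anti-concentration forces the sampled best response to stay spread out, which now \emph{hurts} the player.

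I would build the instance in two stages. In the first $N$ rounds the payoff vectors are chosen so that (i) the cumulative payoff vector $G_N$ has (essentially) equal coordinates, creating the massive ties that make the sampled argmax roughly uniform over an $\Omega(N)$-size set of strategies, while (ii) the player's expected cumulative payoff remains $O(1)$ because at each round the strategies that can be the sampled best response are exactly the ones with low payoff on that round. In the remaining $N$ rounds I would keep the sampled best response diluted over $\Omega(N)$ strategies --- anti-concentration of the Rademacher-weighted sums $\sum_{\tau\in\mathbb S_t} g_{\tau,i\ominus j}$, together with the uniform tie-break in~\eqref{eq:sfp}, rules out the player putting more than $O(1/N)$ mass on any one strategy --- while arranging that some fixed strategy collects per-round payoff bounded away from $0$; being spread over $\Omega(N)$ strategies, the player captures only an $O(1/N)$ fraction of it.

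The key steps, in order: (1) write down the payoff sequence and check the entries stay in $[-1,1]$; (2) for each round $t$, bound the law of $k_t$ --- the crucial estimate is a \emph{lower} bound on the cardinality of the sampled argmax, obtained by showing that several Rademacher-weighted partial sums are simultaneously maximal with non-negligible probability, so that $\mathbb{E}[g_{t,k_t}] \le (\text{best per-round payoff})\cdot O(1/N) + o(1)$; (3) lower-bound $\max_{k}\sum_{t=1}^{T} g_{t,k}$ by $\Omega(N)$; (4) subtract to get $\mathbb{E}[\mathcal{R}_T] = \Omega(N)$ and let $N\to\infty$.

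The main obstacle --- and presumably the reason the construction works only at $T=2N$ --- lies in making steps (2) and (3) compatible: one must keep the player genuinely fooled (its sampled best response spread over $\Omega(N)$ strategies, i.e.\ the cumulative edge of the good strategy never rising above the $\sqrt{t}$-scale fluctuations that Littlewood--Offord theory controls) while \emph{simultaneously} guaranteeing that a fixed strategy is $\Omega(N)$ better in hindsight; these two requirements push in opposite directions, and balancing the horizon $T$, the number of strategies $N$, and the per-round payoff gaps so that both hold is the delicate part. A secondary technical point is the careful treatment of ties (and of the empty-sample convention), since it is precisely the uniform tie-break that makes the ``spread-out'' conclusion in step (2) go through.
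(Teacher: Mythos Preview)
Your proposal does not match the paper's proof, and more importantly the mechanism you propose --- using Littlewood--Offord anti-concentration to force the sampled argmax to be spread over $\Omega(N)$ strategies --- is pointed in the wrong direction. Anti-concentration (Theorem~\ref{thm:LO}, Corollary~\ref{cor:LO}) gives \emph{upper} bounds on the probability that a Rademacher sum falls in a short interval; in particular it says exact ties among generic sampled cumulative payoffs are \emph{unlikely}. What you need for ``the sampled argmax has cardinality $\Omega(N)$'' is precisely that many sampled sums are simultaneously maximal, i.e.\ a \emph{lower} bound on a tie event. Littlewood--Offord is the wrong tool for that. Relatedly, the claim that anti-concentration limits $P(k_t=i)$ to $O(1/N)$ is not supported: the event $\{k_t=i\}$ is an intersection of $N-1$ half-space events in correlated Rademacher sums, not a one-dimensional small-ball event, and Littlewood--Offord gives only an $O(1/\sqrt{t})$-type bound, not $O(1/N)$.

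The paper's construction avoids this entirely by using \emph{symmetry}, not anti-concentration. The payoff matrix is arranged so that, up to the beginning of round $2m-1$, strategies $m,m+1,\ldots,N$ have had \emph{literally identical} payoff histories; hence their sampled cumulative payoffs are equal for every realization of $\mathbb S_t$, and the uniform tie-break makes the algorithm's probability of picking the single good strategy (strategy $m$) at most $1/(N-m+1)$. The second half of the trick is a bait-and-switch: at round $2m-1$ only strategy $m$ is good, but immediately thereafter strategy $m$ becomes permanently bad while $m+1,\ldots,N$ remain indistinguishable. On the interleaved even rounds the algorithm, conditional on sampling the previous odd round (probability $1/2$), is lured into the already-bad set $\{1,\ldots,m\}$. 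Summing gives an algorithm payoff of at most $-3N/2+O(\log N)$ against a best fixed strategy (strategy $N$) with payoff $-N$, hence regret $\ge N/2-O(\log N)$. No Littlewood--Offord, no two-stage balancing, and the ``player fooled vs.\ fixed strategy $\Omega(N)$ better'' tension you correctly identify is dissolved because the per-round optimal strategy keeps rotating while the hindsight-optimal one (strategy $N$) is never the sampled leader until the very end.
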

\begin{proof}
Consider the $N \times 2N$ payoff matrix:
\[
\begin{bmatrix}
0  & -1 & -1 & -1 & -1 & -1 & \dots \\
-1 &  0 &  0 & -1 & -1 & -1 & \dots \\
-1 &  0 & -1 &  0 &  0 & -1 & \dots \\
-1 &  0 & -1 &  0 & -1 &  0 & \dots \\
\vdots & \vdots & \vdots & \vdots & \vdots & \vdots & \vdots \\
-1 & 0 & -1 & 0 & -1 & 0 & \dots \\
\end{bmatrix}.
\]
Each row represents a strategy and each column represents payoffs of the strategies in a particular round. In the $m$th odd round, i.e., in the $(2m-1)$-th round, the adversary assigns a payoff of $-1$ to all strategies except strategy $m$ which gets a payoff of $0$. In the $m$th even round, i.e., the $2m$-th round, the adversary assigns a payoff of $-1$ to strategies $1$ through $m$ and a payoff of $0$ to the others. Note that, in all rounds after $2m$, strategy $m$ will always be given a payoff of $-1$. Overall, we will have $N$ strategies and $2N$ rounds, with the best constant strategy being the last strategy which accumulates payoff of $-N$.

To analyze the expected regret, we consider even and odd rounds separately. Note that as long as round $2m-1$ is picked in the sampled history, which happens with probability $\frac{1}{2}$, the algorithm will not choose any strategy from $m+1$ through $N$ at round $2m$. This is because they all have identical payoffs as strategy $m$ prior to round $2m-1$, and strategy $m$ looks better on round $2m-1$. So, the algorithm will pick a strategy from $1$ through $m$ on round $2m$, all of which acquire a gain of $-1$. Therefore, the algorithm will acquire an expected payoff of at most $-\frac{1}{2}$ on even rounds.

Next we consider odd rounds. On round $2m-1$, we observe that the leader set (i.e., the argmin in~\eqref{eq:sfp}) either includes strategy $m$ or not. If it includes strategy $m$, it will additionally include strategies $m+1$ through $N$ as well since they have had identical payoffs in the past. It may possibly also include some strategies in the set $\{1,\dots,m-1\}$. Since the algorithm randomly picks a strategy from the leader set, and all but one of them has a payoff of $-1$ on round $2m-1$, the expected gain of the algorithm is at most $-\frac{N-m}{N-m+1}$. If the leader set does not include strategy $m$, then the expected gain is exactly $-1$ since strategy $m$ is the only one with zero payoff at round $2m-1$. Therefore, the algorithm will acquire an expected payoff of at most $-\frac{N-m}{N-m+1}$ on even rounds.

Hence, the expected regret of Sampled Fictitious Play under this scenario with $N$ strategies and $2N$ rounds is at least, for some $c >0$,
$$
\mathcal{R}_T \ge -N - \Big(-\sum_{m=1}^N (\frac{N-m}{N-m+1}) - \frac{N}{2}\Big)  \ge \frac{N}{2} - c \log(N) = \Omega(N).
$$
\end{proof}

\section{Asymmetric Probabilities}\label{sec:AB}

In this section we prove that for binary payoff and arbitrary $\alpha \in (0,1)$ insead of just $1/2$, the expected regret is $O(\sqrt{T})$ where the constant hidden in $O(\cdot)$ notation blows up in either of the two extreme case:
$\alpha \to 0$ and $\alpha \to 1$. Note that we are still considering the single stream version~\eqref{eq:sfp_single} of the learning procedure. 
\begin{theorem}
For $\alpha \in (0,1)$ and $g_{t} \in \{-1,0,1\}^N$, assuming that $T > \max(\frac{2}{1-\alpha},\frac{2}{\alpha})$, the expected regret satisfies
$$
\mathbb{E}\left[ \mathcal{R}_T \right] \leq \frac{40N^2Q_\alpha}{\alpha} \sqrt{T}
$$
where $Q_\alpha = \frac{e}{2\pi}\times \sqrt{\frac{2}{\alpha(1-\alpha)}}$.
\end{theorem}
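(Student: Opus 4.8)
The plan is to follow the same three-step architecture already used in the proof of Corollary~\ref{cor:obliviousregret}, but to track the $\alpha$-dependence carefully through each step. First I would invoke the general-$\alpha$ version of Theorem~\ref{thm:regret2switches}, which is actually what its proof establishes, giving
$$
\mathbb{E}\left[ \mathcal{R}_T \right] \leq \frac{N^2}{\alpha} \max\limits_{1 \leq i,j \leq N} \sum_{t=1}^{T}  |g_{t,i \ominus j}|P\left(  \tilde{G}_{t-1,i \ominus j} \geq 0, \tilde{G}_{t,i \ominus j} \le 0 \right),
$$
where now $\tilde{g}_t = (1+\epsilon_t) g_t$ with $P(\epsilon_t = +1) = \alpha$. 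So it suffices to bound the switching-probability sum by $40 Q_\alpha \sqrt{T}$.

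Second, and this is where the binary-payoff assumption enters, I would exploit that $g_{t,i\ominus j} \in \{-2,-1,0,1,2\}$. Times with $g_{t,i\ominus j}=0$ contribute nothing, so we may restrict to times where $|g_{t,i\ominus j}| \in \{1,2\}$. This means the multi-scale decomposition of Step~2 collapses entirely: there is no need for the sets $A_0,\dots,A_K$ and no $\log\log T$ factor, because all nonzero step sizes are already bounded below by $1$. For a fixed pair $(i,j)$, order the times with nonzero $g_{\tau,i\ominus j}$ as $\tau_1 < \tau_2 < \cdots$; conditioning on nothing (there is no complement set to condition on here) and mimicking the computation in the proof of Theorem~\ref{thm:A_intermediate}, a switch at time $\tau_n$ forces $\sum_{m=1}^{n} \epsilon_{\tau_m} g_{\tau_m, i\ominus j}$ to lie in a fixed closed interval of radius $\Delta = |g_{\tau_n,i\ominus j}| \le 2$. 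Since the summands $g_{\tau_m,i\ominus j}$ are integers of absolute value at least $1$, Corollary~\ref{cor:LO} — but using Lemma~\ref{lemma:bernoullicoeff} with the stated $\alpha$ rather than $\alpha=1/2$, which is exactly why $Q_\alpha$ appears in place of $C_{LO}$, valid because $T > \max(2/(1-\alpha),2/\alpha)$ — gives
$$
P\!\left(\textstyle\sum_{m=1}^{n} \epsilon_{\tau_m} g_{\tau_m,i\ominus j} \in B\right) \le \frac{Q_\alpha(\lfloor\Delta\rfloor+1)}{\sqrt{n}} \le \frac{3 Q_\alpha}{\sqrt{n}}.
$$
Multiplying by $|g_{\tau_n,i\ominus j}| \le 2$ and summing over $n \le T$ using $\sum_{n=1}^{T} n^{-1/2} \le 2\sqrt{T}$ yields $\sum_t |g_{t,i\ominus j}| P(\text{switch}) \le 2 \cdot 3 Q_\alpha \cdot 2\sqrt{T} = 12 Q_\alpha \sqrt{T} \le 40 Q_\alpha \sqrt{T}$; the slack in the constant absorbs the small-$n$ cases of Lemma~\ref{lemma:bernoullicoeff} (handled just as in the proof of Corollary~\ref{cor:LO}). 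Plugging back in gives $\mathbb{E}[\mathcal{R}_T] \le \frac{40 N^2 Q_\alpha}{\alpha}\sqrt{T}$.

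The main obstacle — really the only subtlety — is the bookkeeping around Lemma~\ref{lemma:bernoullicoeff}: one must confirm that with $\epsilon_t \in \{-1,+1\}$ the relevant count $\sum \mathbbm{1}(\epsilon_{\tau_m}=+1)$ is a sum of $n$ i.i.d.\ Bernoulli($\alpha$) variables, so the small-ball bound over $n$ terms carries the factor $n^{-1/2}$ with constant $Q_\alpha$ precisely when $n > \max(2/(1-\alpha),2/\alpha)$; for the finitely many smaller $n$ one either bounds the probability by $1$ (absorbed into the generous constant $40$) or notes that $3Q_\alpha n^{-1/2} \ge 1$ there. Since $T$ itself exceeds $\max(2/(1-\alpha),2/\alpha)$ by hypothesis, this is a harmless finite correction. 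Everything else is a direct specialization of Step~1 and Step~2 with the multi-scale machinery removed, so no new ideas beyond tracking $Q_\alpha$ are needed.
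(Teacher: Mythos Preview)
Your plan has a genuine gap at the point where you invoke ``Corollary~\ref{cor:LO} --- but using Lemma~\ref{lemma:bernoullicoeff} with the stated $\alpha$ rather than $\alpha=1/2$''. Corollary~\ref{cor:LO} is obtained from Theorem~\ref{thm:LO} (Erd\H{o}s's Littlewood--Offord theorem), which is stated and proved only for \emph{symmetric} Rademacher variables; its proof is combinatorial and does not survive replacing $\tfrac{S(n)}{2^n}$ by the maximal point mass of an asymmetric binomial. Lemma~\ref{lemma:bernoullicoeff}, on the other hand, bounds $P(\sum X_i = k)$ for i.i.d.\ Bernoulli$(\alpha)$ variables, i.e.\ it handles only the case of \emph{constant} step size, where $\sum_m \epsilon_{\tau_m} g_{\tau_m,i\ominus j}$ is an affine function of a binomial random variable. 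In your setup the nonzero increments $g_{\tau_m,i\ominus j}$ take values in $\{-2,-1,1,2\}$, so the sum is not a scaled binomial and neither Theorem~\ref{thm:LO} (wrong $\alpha$) nor Lemma~\ref{lemma:bernoullicoeff} (wrong step sizes) applies directly. Your assertion that the small-ball probability is at most $Q_\alpha(\lfloor\Delta\rfloor+1)/\sqrt{n}$ is therefore unproved with the tools in the paper.

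The paper's proof circumvents exactly this difficulty. Rather than lumping all nonzero times together, it partitions them into the four classes $A_k = \{t : g_{t,i\ominus j}=k\}$, $k\in\{-2,-1,1,2\}$, and conditions on $\epsilon_{-A_k}$. Within a fixed class every step equals the same constant $k$, so conditionally the random sum is $k\sum_s \epsilon_{t_{k,s}}$, an affine image of a Binomial$(n,\alpha)$ variable; the switch event then forces this sum to hit one of at most five integers, and Lemma~\ref{lemma:bernoullicoeff} gives $5Q_\alpha/\sqrt{n}$ per term. Summing over $n$ and over the four classes yields the bound. In short, your instinct that ``the multi-scale decomposition collapses'' is correct, but the remaining partition by \emph{value} (and the accompanying conditioning) is not optional: it is precisely what lets you trade the unavailable asymmetric Littlewood--Offord bound for the elementary binomial point-mass bound of Lemma~\ref{lemma:bernoullicoeff}. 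Your sentence ``conditioning on nothing (there is no complement set to condition on here)'' is where the argument breaks.
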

\begin{proof}
We begin with the inequality obtained in the proof of Theorem \ref{thm:regret2switches}:
\begin{equation}\label{eq:intermediatebound}
\mathbb{E}\left[ \mathcal{R}_T \right] \leq \frac{N^2}{\alpha} \max\limits_{1 \leq i,j \leq N} \sum_{t=1}^{T}  |g_{t,i \ominus j}|P\left(  \tilde{G}_{t-1,i \ominus j} \geq 0, \tilde{G}_{t,i \ominus j} \le 0 \right). 
\end{equation}
As before, we fix $i$ and $j$, and will bound the expression 
$$
\sum_{t=1}^{T}  |g_{t,i \ominus j}|P( \sum_{\tau=1}^{t-1} (1+\epsilon_\tau)g_{\tau,i \ominus j} \geq 0, \sum_{\tau=1}^{t} (1+\epsilon_\tau)g_{\tau,i \ominus j} \le 0).
$$ 
The rest of the proof is similar to the proof of Theorem \ref{thm:A_intermediate}. Define the classes $A_{k} = \{t: g_{t,i \ominus j} = k, t=1,\dots,T\}$ for $k \in \{-2,-1,1,2\}$.
We have,
\begin{align*}
&\sum_{t=1}^{T}  |g_{t,i \ominus j}|P\left( \sum_{\tau=1}^{t-1} (1+\epsilon_\tau)g_{\tau,i \ominus j} \geq 0, \sum_{\tau=1}^{t} (1+\epsilon_\tau)g_{\tau,i \ominus j} \le 0\right)  \\
\leq& 2\sum_{t=1}^{T} P\left(\sum_{\tau=1}^{t-1}\epsilon_\tau g_{\tau,i \ominus j} \geq -\sum_{\tau=1}^{t-1} g_{\tau,i \ominus j} , \sum_{\tau=1}^{t}\epsilon_\tau g_{\tau,i \ominus j} \le -\sum_{\tau=1}^{t} g_{\tau,i \ominus j}\right) \\
= &2\sum_{k \in \{-2,-1,1,2\}} \sum_{t \in A_k} P\left(\sum_{\tau=1}^{t-1}\epsilon_\tau g_{\tau,i \ominus j} \leq -\sum_{\tau=1}^{t-1} g_{\tau,i \ominus j} , \sum_{\tau=1}^{t}\epsilon_s g_{\tau,i \ominus j} \le -\sum_{\tau=1}^{t} g_{\tau,i \ominus j}\right) .
\end{align*}
For any $k \in \{-2,-1,1,2\}$, 
\begin{align*}
&\sum_{t \in A_k} P\left(\sum_{\tau=1}^{t-1}\epsilon_\tau g_{\tau,i \ominus j} \geq -\sum_{\tau=1}^{t-1} g_{\tau,i \ominus j} , \sum_{\tau=1}^{t}\epsilon_\tau g_{\tau,i \ominus j} \le -\sum_{\tau=1}^{t} g_{\tau,i \ominus j}\right) \\
\leq & \sup_{\epsilon_{-A_k}} \sum_{t \in A_k}\mathbb{E}_{\epsilon_{A_k}}[\mathbbm{1}_{\sum_{\tau=1}^{t-1}\epsilon_\tau g_{\tau,i \ominus j} \geq -\sum_{\tau=1}^{t-1} g_{\tau,i \ominus j} , \sum_{\tau=1}^{t}\epsilon_\tau g_{\tau,i \ominus j} \le -\sum_{\tau=1}^{t} g_{\tau,i \ominus j}}|\epsilon_{-A_k}].
\end{align*}
Let $A_k = \{t_{k,1}, \dots, t_{k,|A_k|}\}$ with elements listed in increasing order of time index. Also define,
for $n \in \{1,\ldots,|A_k|\}$,
$$
D_n =D_n(\epsilon_{-A_k}) =  -\sum_{\tau=1, \tau \in \{-A_k\}}^{t_{k,n}-1}\epsilon_{\tau} g_{\tau,i \ominus j} - \sum_{\tau=1, \tau \in \{-A_k\}}^{t_{k,n}-1}g_{\tau,i \ominus j} .
$$
We then proceed as follows.
\begin{align*}
&\sum_{t \in A_k} \mathbb{E}_{\epsilon_{A_k}}[\mathbbm{1}_{\sum_{\tau=1}^{t-1}\epsilon_\tau g_{\tau,i \ominus j} \geq -\sum_{\tau=1}^{t-1} g_{\tau,i \ominus j} , \sum_{\tau=1}^{t}\epsilon_\tau g_{\tau,i \ominus j} \le -\sum_{\tau=1}^{t} g_{\tau,i \ominus j}}|\epsilon_{-A_k}] \\
= &\sum_{n=1}^{|A_k|} \mathbb{E}_{\epsilon_{A_k}}[\mathbbm{1}_{(\sum_{s=1}^{n-1}\epsilon_{t_{k,s}} g_{t_{k,s},i \ominus j} \geq -\sum_{s=1}^{n-1} g_{t_{k,s},i \ominus j} + D_n , \sum_{s=1}^{n}\epsilon_{t_{k,s}} g_{s,i \ominus j} \le -\sum_{s=1}^{n} g_{t_{k,s},i \ominus j} + D_n)}|\epsilon_{-A_k}] \\
= & \sum_{n=1}^{|A_k|} P\left(\sum_{s=1}^{n-1}\epsilon_{t_{k,s}} g_{t_{k,s},i \ominus j} \geq -\sum_{s=1}^{n-1} g_{t_{k,s},i \ominus j} + D_n, \right. \\
& \phantom{\sum_{n=1}^{|A_k|} P} 
\left. \sum_{s=1}^{n}\epsilon_{t_{k,s}} g_{t_{k,s},i \ominus j} \le -\sum_{s=1}^{n} g_{t_{k,s},i \ominus j} + D_n \middle| \epsilon_{-A_k} \right) \\
\leq &\sum_{n=1}^{|A_k|} P\left( \bigcup\limits_{u=0}^{4} (\sum_{s=1}^{n}\epsilon_{t_{k,s}} g_{t_{k,s},i \ominus j} = -\sum_{s=1}^{n-1} g_{t_{k,s},i \ominus j} + D_n - u) \middle| \epsilon_{-A_k} \right)  \\
\leq &\sum_{n=1}^{|A_k|}  \left( \sum_{u=0}^4 P \left(\sum_{s=1}^{n}\epsilon_{t_{k,s}} g_{t_{k,s},i \ominus j} = -\sum_{s=1}^{n-1} g_{t_{k,s},i \ominus j} + D_n - u \middle| \epsilon_{-A_k} \right) \right) \\
\leq & 5\sum_{n=1}^{|A_k|} \frac{Q_\alpha}{\sqrt{n}} 
\leq 10Q_\alpha \sqrt{|A_k|} 
\end{align*}
where $Q_\alpha = \frac{e}{2\pi}\times \sqrt{\frac{2}{\alpha(1-\alpha)}}$ from Lemma \ref{lemma:bernoullicoeff}. Putthing things together, we have 
\begin{align*}
\mathbb{E}\left[ \mathcal{R}_T \right] \leq \frac{20N^2Q_\alpha}{\alpha}\sum_{k \in \{-2,-1,1,2\}} \sqrt{|A_k|} \le  \frac{40N^2Q_\alpha}{\alpha}\sqrt{T}.
\end{align*}
\end{proof}

%
%
%
%

\end{appendices}

\end{document}